\newtheorem{theorem}{Theorem}
\newtheorem{definition}[theorem]{Definition}
\newtheorem{example}[theorem]{Example}
\newtheorem{lemma}[theorem]{Lemma}
\newtheorem{proposition}[theorem]{Proposition}
\newtheorem{remark}[theorem]{Remark}
\newenvironment{proof}[1][Proof]{\textbf{#1.} }{\ \rule{0.5em}{0.5em}}
\begin{document}

\title{From Risk Measures to Research Measures}
\author{Marco Frittelli\thanks{%
Universit\`{a} degli Studi di Milano.}\qquad Ilaria Peri\thanks{%
Universit\`{a} degli Studi di Milano Bicocca.}}
\maketitle

\begin{abstract}
In order to evaluate the quality of the scientific research, we introduce a
new family of scientific performance measures, called Scientific Research
Measures (SRM). Our proposal originates from the more recent developments in
the theory of risk measures and is an attempt to resolve the many problems
of the existing bibliometric indices.

The SRM that we introduce are based on the whole scientist's citation record
and are: \textit{coherent}, as they share the same structural properties; 
\textit{flexible} to fit peculiarities of different areas and seniorities; 
\textit{granular, }as they allow a more precise comparison between
scientists, and \textit{inclusive}, as they comprehend several popular
indices.

Another key feature of our SRM is that they are planned to be \textit{%
calibrated} to the particular scientific community. We also propose a 
\textit{dual formulation} of this problem and explain its relevance in this
context.
\end{abstract}

\textbf{Keywords}: Bibliometric Indices, Citations, Risk Measures,
Scientific Impact Measures, Calibration, Duality.

\section{Introduction}

In the recent years the evaluation of the scientist's performance has become
increasingly important. In fact, most crucial decisions regarding faculty
recruitment, research projects, research time, academic promotion, travel
money, award of grants depend on great extent upon the scientific merits of
the involved researchers.

\bigskip

The scope of the valuation of the scientific research is twofold:

\begin{itemize}
\item Provide an updated picture of the existing research activity, in order
to allocate financial resources in relation to the scientific quality and
scientific production;

\item Determine an increase in the quality of the scientific research (of
the structures).
\end{itemize}

Even though both aims seems quite obvious, it is worthwhile to emphasize
that the selection of erroneous valuation criteria (one trivial example
would be "the number of the publications") could have important negative
impact on the future research quality. The methodologies for the valuation
can be divided into two categories:

\begin{itemize}
\item content valuation, based on internal judgments committee and external
reviews of peer panels.

\item context valuation, based on bibliometrics (i.e. statistics derived
from citation data) and the characteristics of the Journals associated to
the publications.
\end{itemize}

Economic considerations strongly depone of using the context method on a
systematic (yearly) base, while peer review is more plausible on a multiple
year base and should also be finalized to check, harmonize, and tune the
outcomes based on bibliometric indices.

Thanks also to the major availability of the online database (i.e. Google
Scholar, ISI Web of Science, MathSciNet, Scopus) several different
bibliometric measures have been recently introduced and applied.

There are several critics, as those clearly underlined by the Citation
Statistics Report of the International Mathematical Union (2008) \cite{CIT},
to the use of the citations as a key factor in the assessment of the quality
of the research. However, many of these critics can be satisfactorily
addressed and our proposal is one reasonable way to achieve this task.

\textbf{We agree that the quality} \textbf{of the scientific research can
not be reduced to citations, but we also believe that the information
embedded in citations should be properly quantified} \textbf{and should be
one component of the valuation of the quality of the scientific research}.

We emphasize that the output of the valuation is the classification of
authors (and structures) into few merit classes of homogeneous research
quality: it is not intended to provide a fine ranking. In the Appendix we
listed a brief summary of the pros and cons of bibliometric indices and of
the peer review process.

\bigskip

In 2005 Hirsch \cite{H05} proposed the \emph{h-index, }that is now the most
popular and used citation-based metric. The $h$-index of an author is
defined as the largest number $h\in \mathbb{N}$ satisfying the condition
that $h$ distinct publications of the author have (each one) $h$ citations.
The $h$-index is a vague attempt to measure at the same time the production
in terms of number of publications and the research quality in terms of
citations per publication. Our approach aims exactly to take better in
consideration the balance between these two components.

After its introduction, the $h$-index received wide attention from the
scientific community and it has been extended by many authors who proposed
other indices (for an overview see Alonso et al., 2009 \cite{ACHH}) in order
to overcome some of the drawbacks of it (see Bornmann and Daniel, 2007 \cite%
{BD07}).

\bigskip

In this paper we introduce three novel features in the methodology regarding
the measurement of the quality of scientific research:

\begin{enumerate}
\item The coherency of the research measures

\item The calibration technique

\item The dual setting.
\end{enumerate}

\bigskip

1. Differently from any existing approach, our formulation is clearly
germinated from the Theory of Risk Measures. The axiomatic approach
developed in the seminal paper by Artzner et al.\cite{ADEH99} turned out to
be, in this last decade, very influential for the theory of risk measures:
instead of focusing on some particular measurement of the risk carried by
financial positions (the variance, the $V@R$, etc. etc.), \cite{ADEH99}
proposed a class of measures satisfying some reasonable properties (the
\textquotedblleft coherent\textquotedblright\ axioms). Ideally, each
institution could select its own risk measure, provided it obeyed the
structural coherent properties. This approach added flexibility in the
selection of the risk measure and, at the same time, established a unified
framework.

\textit{We propose the same approach in order to determine a good class of
scientific performance measures, that we call Scientific Research Measures
(SRM).}

\bigskip

The theory of coherent risk measures was later extended to the class of
convex risk measures (Follmer and Schied \cite{FS02}, Frittelli and Rosazza 
\cite{FR02}). The origin of our proposal can be traced in the more recent
development of this theory, leading to the notion of quasi-convex risk
measures introduced by Cerreia-Vioglio et al. \cite{CMMMa} and further
developed in the dynamic framework by Frittelli and Maggis \cite{FM11}.
Additional papers in this area include: Cherny and Madan \cite{CM09}, that
introduced the concept of an Acceptability Index having the property of
quasi-concavity; Drapeau and Kupper \cite{DK10}, where the correspondence
between a quasi-convex risk measure and the associated family of acceptance
sets - already present in \cite{CM09} - is fully analyzed.

The representation of quasi-convex monotone maps in terms of family of
acceptance sets, as well as their dual formulations, are the key
mathematical tools underlying our definition of SRM.

\bigskip

2. \textit{A second feature of \ our approach is that our SRM are planned to
be "calibrated from the market data\textquotedblright }, a typical feature
of modeling in finance. As explained in Section 4, we \emph{calibrate} the
SRM from the historic data that are available for one particular scientific
area and seniority. In this way, each SRM will fit appropriately the
characteristics of the research field and seniority under consideration.

\bigskip

3. \textit{Our third innovation in this area, is provided by the dual
approach to the valuation of the quality of the scientific research}. As
explained in Section 3.2, we establish a duality between the primal space,
the space of random variables (representing the citations records) defined
on the set of Journals and its dual space, the space of the "Arrow-Debreu
price" of each Journal, which could be given by the impact factor of the
Journal. In section 3, we discuss this duality and show that our SRM fits
very well in this framework.

We finally report some empirical results obtained \ by\textit{\ }calibrating
the performance curves to a specific data set.

\bigskip

To summarize, we propose a family of SRMs that are:

\begin{itemize}
\item \textit{coherent}, as they share the same structural properties -
based on an axiomatic approach;

\item \textit{calibrated} to the particular scientific community;

\item \textit{flexible} in order to fit peculiarities of different areas and
ages;

\item \textit{robust}, as they can be defined, via duality, through a set of
probabilities representing the \textquotedblleft value\textquotedblright\ of
each Journal;

\item \textit{granular, }as they allow a more precise comparison between
scientists;

\item \textit{inclusive}, as they comprehends several popular indices.
\end{itemize}

\section{On a class of Scientific Research Measure}

We represent each author by a vector $X$ of citations, where the $i$-th
component of $X$\ represents the number of citations of the $i$-th
publication and the components of $X$ are ranked in decreasing order. We
consider the whole \emph{citation curve} of an author as a decreasing
bounded step functions $X$ (see Fig.\ref{author}) in the convex cone:%
\begin{eqnarray*}
\mathcal{X}^{+} &=&\left\{ 
\begin{array}{c}
X:\mathbb{R}\rightarrow \mathbb{R}_{+}\mid X\text{ is bounded, with only a
finite numbers of values,\ } \\ 
\text{decreasing on }\mathbb{R}_{+}\text{ and such that }X(x)=0\text{ for }%
x\leq 0.%
\end{array}%
\right\} 
\end{eqnarray*}%
\begin{figure}[h!]
\centering
\includegraphics[width=0.55\textwidth]{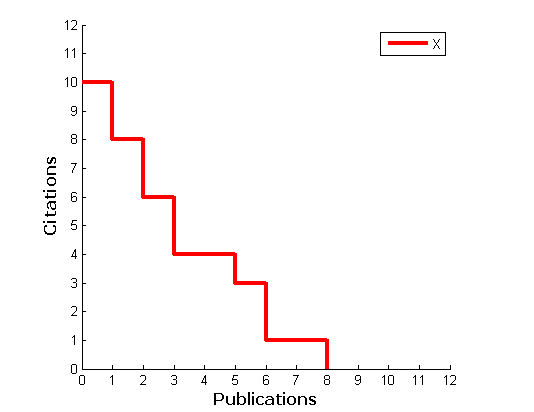}
\caption{Author's Citation Curve}\label{author}
\end{figure}

%&&\FRAME{itbpFU}{5.1456in}{2.1127in}{0in}{\Qcb{Fig. \protect\ref{author}.
%Author's Citation Curve}}{\Qlb{author}}{Grafico_autore.png}{\special%
%{language "Scientific Word";type "GRAPHIC";display "USEDEF";valid_file
%"F";width 5.1456in;height 2.1127in;depth 0in;original-width
%7.107in;original-height 2.8202in;cropleft "0";croptop "1";cropright
%"1";cropbottom "0";filename 'Grafico_autore.png';file-properties "XNPEU";}}

We compare the citation curve $X$ of an author with a theoretical citation
curve $f_{q}$ representing the desiderata citations at a fix performance
level $q$. For this purpose we introduce the following class of curves. Let $%
\mathcal{I}\subseteq \mathbb{R}$ be the index set of the \emph{performance
level}.\ For any $q\in \mathcal{I}$ we define the theoretical \emph{%
performance} \emph{curve of level }$\emph{q}$ as a function $f_{q}:\mathbb{R}%
\rightarrow \mathbb{R}_{+}$ that associates to each publication $x\in 
\mathbb{R}$ the corresponding number of citations $f_{q}(x)\in \mathbb{R}%
_{+} $.

\begin{definition}[Performance curves]
Given a index set $\mathcal{I}\subseteq \mathbb{R}$ of performance levels $%
q\in \mathcal{I}$, a class $\mathbb{F}:=\left\{ f_{q}\right\} _{q\in 
\mathcal{I}}$ of functions $f_{q}:\mathbb{R}\rightarrow \mathbb{R}_{+}$ is a 
\emph{family of performance curves} if

i) $\left\{ f_{q}\right\} $ is increasing in $q$, i.e. if $q\geq p$ then $%
f_{q}(x)\geq f_{p}(x)$ for all $x$;

ii) for each $q$, $f_{q}(x)$ is left continuous in $x$;

iii) $f_{q}(x)=0$ for all $x\leq 0$ and all $q.$
\end{definition}

The main feature of these curves is that a higher performance level implies
a higher number of citations. This family of curves is crucial for our
objective to build a SRM able to comprehend many of the popular indices and
to be calibrated to the scientific area and the seniority of the authors.

\begin{definition}[Performance sets and SRM]
Given a family of performance curves\emph{\ \ }$\mathbb{F=}\left\{
f_{q}\right\} _{q}$, we define the\emph{\ family of performance sets} $%
\mathcal{A}_{\mathbb{F}}:=\left\{ \mathcal{A}_{q}\right\} _{q}$ by%
\begin{equation*}
\mathcal{A}_{q}:=\left\{ X\in \mathcal{X}^{+}\mid X(x)\geq f_{q}(x)\text{
for all}\ x\in \mathbb{R}\right\} .
\end{equation*}%
The \emph{Scientific Research Measure (SRM) }is the map $\phi _{\mathbb{F}}:%
\mathcal{X}^{+}\rightarrow \mathbb{R}$ associated to $\mathbb{F}$ and $%
\mathcal{A}_{\mathbb{F}}$ given by%
\begin{eqnarray}
\phi _{\mathbb{F}}(X) &:&=\sup \left\{ q\in \mathcal{I}\mid X\in \mathcal{A}%
_{q}\right\}  \notag \\
&=&\sup \left\{ q\in \mathcal{I}\mid X(x)\geq f_{q}(x)\text{ for all}\ x\in 
\mathbb{R}\right\} .  \label{SRM}
\end{eqnarray}
\end{definition}

\begin{figure}[h!]
\centering
\includegraphics[width=0.5\textwidth]{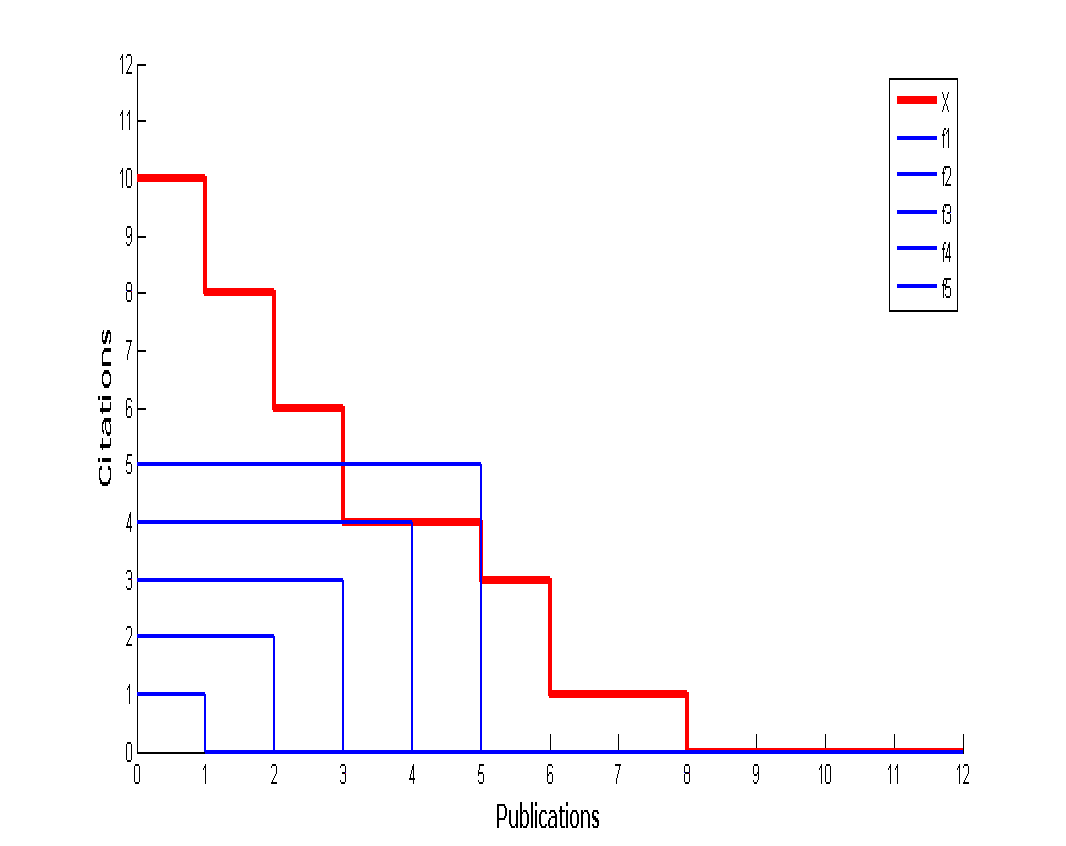}
\caption{Determination of a particular SRM. $h$-index equal to $4$.}\label{figSRM}
\end{figure}

The SRM $\phi _{\mathbb{F}}$ is obtained by the comparison between the real
citation curve of an author $X$ (the red line in Fig.\ref{figSRM}) and the
family $\mathbb{F}$ of performance curves (the blue line in Fig.\ref{figSRM}%
): the $\phi _{\mathbb{F}}(X)$ is the greatest level $q$ of the performance
curve $f_{q}$ below the author's citation curve $X.$%

%\begin{equation*}
%\FRAME{itbpFU}{3.4333in}{2.6437in}{0.2707in}{\Qcb{Fig. \protect\ref{figSRM}.
%Determination of a particular SRM, the $h$-index (that in this example is
%equal to $4$).}}{\Qlb{figSRM}}{indiceh.png}{\special{language "Scientific
%Word";type "GRAPHIC";display "USEDEF";valid_file "F";width 3.4333in;height
%2.6437in;depth 0.2707in;original-width 7.107in;original-height
%2.8202in;cropleft "0";croptop "1";cropright "1";cropbottom "0";filename
%'Indiceh.png';file-properties "XNPEU";}}
%\end{equation*}

\subsection{Some examples of existing SRMs}

The previous definition points out the importance of the family of
theoretical performance curves for the determination of the SRM. It is clear
that different choices of $\mathbb{F}:=\left\{ f_{q}\right\} _{q}$ lead to
different SRM $\phi _{\mathbb{F}}$. The following examples show that some
well known indices of scientific performance are particular cases of our
SRM. In the following examples, if $X$ has $p\geq 1$ publications that
received at least one citation, we set: $X=\sum_{i=1}^{p}x_{i}1_{(i-1,i]}$ ,
with $x_{i}\geq x_{i+1}\geq 1$ for all $i$, and $p$ satisfies $X=X1_{(0,p]}.$

\begin{example}[max \# of citations]
\label{cmax}The \emph{maximum number of citations} of the most cited
author's paper is the SRM $\phi _{\mathbb{F}_{c_{\max }}}$ defined by (\ref%
{SRM}), where the performance curves are: $f_{q}=q1_{(0,1]}$.%

\begin{figure}[h!]
\centering
\includegraphics[width=0.45\textwidth]{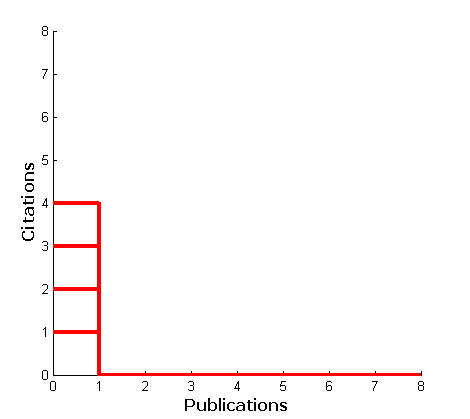}
\label{fcmax}
\end{figure}
\end{example}

%\begin{equation}
%\FRAME{itbpF}{1.471in}{1.3128in}{0in}{}{}{curve_citazioni.png}{\special%
%{language "Scientific Word";type "GRAPHIC";maintain-aspect-ratio
%TRUE;display "USEDEF";valid_file "F";width 1.471in;height 1.3128in;depth
%0in;original-width 4.8957in;original-height 4.3647in;cropleft "0";croptop
%"1";cropright "1";cropbottom "0";filename
%'Curve_citazioni.png';file-properties "XNPEU";}}  \label{fcmax}
%\end{equation}

\begin{example}[total number of publications]
\label{p}The \emph{total number of publications} with at least one citation
is the SRM $\phi _{\mathbb{F}_{p}}$ defined by (\ref{SRM}), where the
performance curves are: $f_{q}=1_{(0,q]}$.%
\end{example}

\begin{figure}[h!]
\centering
\includegraphics[width=0.45\textwidth]{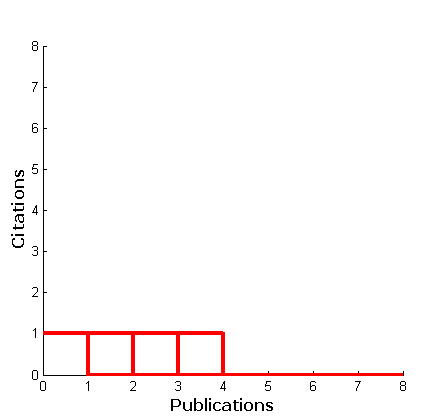}
\label{fp}
\end{figure}

%\begin{equation}
%\FRAME{itbpF}{1.407in}{1.3223in}{0in}{}{}{curve_pubb.png}{\special{language
%"Scientific Word";type "GRAPHIC";maintain-aspect-ratio TRUE;display
%"USEDEF";valid_file "F";width 1.407in;height 1.3223in;depth
%0in;original-width 4.625in;original-height 4.3439in;cropleft "0";croptop
%"1";cropright "1";cropbottom "0";filename 'Curve_pubb.png';file-properties
%"XNPEU";}}  \label{fp}
%\end{equation}

\begin{example}[h-index]
\label{exh}The $h$-index defined by Hirsch \cite{H05} may be rewritten in
our setting. Indeed, the \emph{h-index} is the SRM $\phi _{\mathbb{F}_{h}}$
defined by (\ref{SRM}), where the performance curves are: $f_{q}=q1_{(0,q]}$.%
\end{example}

\begin{figure}[h!]
\centering
\includegraphics[width=0.45\textwidth]{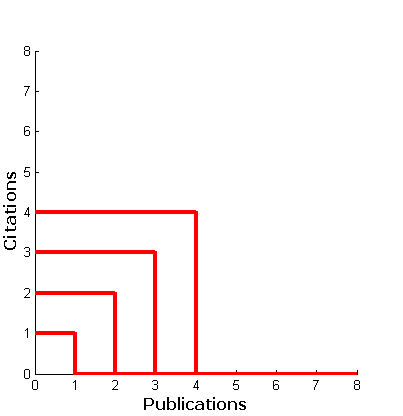}
\label{fh}
\end{figure}

%\begin{equation}
%\FRAME{itbpF}{1.3111in}{1.3526in}{-0.3018in}{}{}{curve_h.png}{\special%
%{language "Scientific Word";type "GRAPHIC";maintain-aspect-ratio
%TRUE;display "USEDEF";valid_file "F";width 1.3111in;height 1.3526in;depth
%-0.3018in;original-width 4.1978in;original-height 4.3336in;cropleft
%"0";croptop "1";cropright "1";cropbottom "0";filename
%'Curve_h.png';file-properties "XNPEU";}}  \label{fh}
%\end{equation}
%
\begin{example}[h$^{2}$-index]
\label{h^2Index}According to Kosmulski, 2006 \cite{K06} a scientist has 
\emph{h}$^{2}$-index \emph{q} if \emph{q} of his $n$ papers have at least $%
q^{2}$ citations each and the other $n-$\emph{q} papers have fewer than $%
q^{2}$ citations each. This index is the SRM $\phi _{\mathbb{F}_{h^{2}}}$%
defined by (\ref{SRM}), where the performance curves are: $%
f_{q}=q^{2}1_{(0,q]}$.
\end{example}

\begin{example}[h$_{\protect\alpha }$-index]
\label{haIndex}Eck and Waltman, 2008 \cite{EW06} proposed the \emph{h}$%
_{\alpha }$\emph{-index} as a generalization of the $h$-index defined as: "a
scientist has \emph{h}$_{\alpha }$-index \emph{h}$_{\alpha }$ if \emph{h}$%
_{\alpha }$ of his $n$ papers have at least $\alpha \cdot $\emph{h}$_{\alpha
}$ citations each and the other $n-$\emph{h}$_{\alpha }$ papers have fewer
than $\alpha \cdot $\emph{h}$_{\alpha }$ citations each". Hence, the \emph{h}%
$_{\alpha }$\emph{-index} is the SRM $\phi _{\mathbb{F}_{h_{\alpha }}}$
defined by (\ref{SRM}),where the performance curves are: $f_{q}=\alpha
q1_{(0,q]}$, $\alpha \in (0,\infty ).$
\end{example}

\begin{example}[w-index]
\label{windex}Woeginger, 2008 \cite{W0308} introduced the \emph{w-index}
defined as: "a \emph{w-index} of at least $k$ means that there are $k$
distinct publications that have at least $1$, $2$, $3$, $4$,..., $k$
citations, respectively". It is the SRM $\phi _{\mathbb{F}_{w}}$ defined by (%
\ref{SRM}), where the performance curves are: $f_{q}(x)=(-x+q+1)1_{(0,q]}.$
\end{example}

\begin{example}[h$_{rat}$-index \& h$_{r}$-index]
The rational and the real h-index, $h_{rat}$\emph{-index} and $h_{r}$\emph{%
-index, }introduced respectively\emph{\ }by Ruane and Tol, 2008 \cite{RT08}\
and Guns and Rousseau, 2009 \cite{GR09} are SRMs, indeed they could be
defined as the $h$-index but taking respectively $q\in \mathcal{I\subseteq }%
\mathbb{Q}$ and $q\in \mathcal{I\subseteq }\mathbb{R}$.
\end{example}

\subsection{Key properties of the SRM}

\begin{proposition}
\label{PROP}Let $\mathbb{F}$ be a family of performance curves, $\mathcal{A}%
_{\mathbb{F}}=\left\{ \mathcal{A}_{q}\right\} _{q}$ be the associated family
of performance sets and $\phi _{\mathbb{F}}$ be the associated \emph{SRM}$.$
Let\ $X_{1},X_{2}\in \mathcal{X}^{+}$. Then:

\begin{enumerate}
\item 
\begin{description}
\item[i)] $\left\{ \mathcal{A}_{q}\right\} $ is decreasing monotone: $%
\mathcal{A}_{q}\subseteq \mathcal{A}_{p}$ for any level $q\geq p$;

\item[ii)] for any $q,$ $\mathcal{A}_{q}$ is monotone: $X_{1}\in \mathcal{A}%
_{q}$ and $X_{2}\geq X_{1}$ implies $X_{2}\in \mathcal{A}_{q}$;

\item[iii)] for any $q$, $\mathcal{A}_{q}$ is convex: if $X_{1},X_{2}\in 
\mathcal{A}_{q}$ then $\lambda X_{1}+(1-\lambda )X_{2}\in \mathcal{A}_{q}$, $%
\lambda \in \left[ 0,1\right] $.
\end{description}

\begin{description}
\item[i)] $\phi _{\mathbb{F}}$ is monotone increasing: if $X_{1}\leq X_{2}$ $%
\Rightarrow \phi _{\mathbb{F}}(X_{1})\leq \phi _{\mathbb{F}}(X_{2})$;

\item[ii)] $\phi _{\mathbb{F}}$ is quasi-concave: $\phi _{\mathbb{F}%
}(\lambda X_{1}+(1-\lambda )X_{2})\geq \min (\phi _{\mathbb{F}}(X_{1}),$ $%
\phi _{\mathbb{F}}(X_{2}))$, $\lambda \in \lbrack 0,1]$.
\end{description}
\end{enumerate}
\end{proposition}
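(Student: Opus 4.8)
The plan is to establish the three set-level properties in Part 1 by elementary pointwise arguments and then deduce the two properties of $\phi _{\mathbb{F}}$ in Part 2 from them. For Part 1, each statement reduces to a pointwise inequality. For (i), I would invoke property (i) of the definition of performance curves: if $q\geq p$ then $f_{q}(x)\geq f_{p}(x)$ for every $x$, so $X\in \mathcal{A}_{q}$ (that is, $X(x)\geq f_{q}(x)$ for all $x$) forces $X(x)\geq f_{p}(x)$ and hence $X\in \mathcal{A}_{p}$. For (ii), if $X_{1}\in \mathcal{A}_{q}$ and $X_{2}\geq X_{1}$ pointwise, then $X_{2}(x)\geq X_{1}(x)\geq f_{q}(x)$, so $X_{2}\in \mathcal{A}_{q}$. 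For (iii), given $X_{1},X_{2}\in \mathcal{A}_{q}$ and $\lambda \in \lbrack 0,1]$, the bound $\lambda X_{1}(x)+(1-\lambda )X_{2}(x)\geq \lambda f_{q}(x)+(1-\lambda )f_{q}(x)=f_{q}(x)$ gives membership, once I note that $\lambda X_{1}+(1-\lambda )X_{2}$ lies in $\mathcal{X}^{+}$ because $\mathcal{X}^{+}$ is a convex cone. None of these present any difficulty.

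For Part 2(i), I would use 1(ii): if $X_{1}\leq X_{2}$ then every level at which $X_{1}$ is accepted is also a level at which $X_{2}$ is accepted, i.e. $\left\{ q\in \mathcal{I}\mid X_{1}\in \mathcal{A}_{q}\right\} \subseteq \left\{ q\in \mathcal{I}\mid X_{2}\in \mathcal{A}_{q}\right\} $, and taking suprema yields $\phi _{\mathbb{F}}(X_{1})\leq \phi _{\mathbb{F}}(X_{2})$.

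The bulk of the work is Part 2(ii). The key preliminary observation is that the supremum defining $\phi _{\mathbb{F}}$ need not be attained, so I first record the following consequence of the nesting in 1(i): if $q<\phi _{\mathbb{F}}(X)$, then there is a witness level $q^{\prime }$ with $q<q^{\prime }\leq \phi _{\mathbb{F}}(X)$ and $X\in \mathcal{A}_{q^{\prime }}$, whence $X\in \mathcal{A}_{q}$ since $\mathcal{A}_{q^{\prime }}\subseteq \mathcal{A}_{q}$. Writing $m:=\min (\phi _{\mathbb{F}}(X_{1}),\phi _{\mathbb{F}}(X_{2}))$, for every $q\in \mathcal{I}$ with $q<m$ this gives $X_{1},X_{2}\in \mathcal{A}_{q}$; convexity (1(iii)) then places $\lambda X_{1}+(1-\lambda )X_{2}$ in $\mathcal{A}_{q}$, so that $\phi _{\mathbb{F}}(\lambda X_{1}+(1-\lambda )X_{2})\geq q$. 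Letting $q\uparrow m$ through $\mathcal{I}$ yields $\phi _{\mathbb{F}}(\lambda X_{1}+(1-\lambda )X_{2})\geq m$, which is the claim.

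The only genuine obstacle is this non-attainment of the supremum, which is exactly what the nesting argument above handles. I would also dispose of the degenerate boundary case in which no level of $\mathcal{I}$ lies strictly below $m$ (for instance $m=\min \mathcal{I}$): there $\phi _{\mathbb{F}}(X_{i})\geq m$ already forces $X_{i}\in \mathcal{A}_{m}$ by nesting, and convexity applied directly at level $q=m$ gives $\phi _{\mathbb{F}}(\lambda X_{1}+(1-\lambda )X_{2})\geq m$, so the inequality persists.
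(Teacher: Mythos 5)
Parts 1 and 2(i) of your argument are correct and coincide with the paper's (the paper simply declares Part 1 to follow from the definitions and gives your exact argument for 2(i)). The gap is in the last step of Part 2(ii). From $\phi_{\mathbb{F}}(\lambda X_{1}+(1-\lambda)X_{2})\geq q$ for every $q\in\mathcal{I}$ with $q<m$, ``letting $q\uparrow m$ through $\mathcal{I}$'' yields only $\phi_{\mathbb{F}}(\lambda X_{1}+(1-\lambda)X_{2})\geq\sup\{q\in\mathcal{I}\mid q<m\}$, and this supremum equals $m$ only if $\mathcal{I}$ accumulates at $m$ from below. That is precisely what fails for the discrete index sets that are the paper's flagship examples. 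Take the $h$-index with $\mathcal{I}=\mathbb{N}$ and $X_{1},X_{2}$ both of $h$-index $m=5$: the levels of $\mathcal{I}$ strictly below $m$ are $1,\dots,4$, so your main argument proves only $\phi_{\mathbb{F}}(\lambda X_{1}+(1-\lambda)X_{2})\geq 4$; and your boundary clause does not trigger, because you reserved it for the case in which \emph{no} level of $\mathcal{I}$ lies strictly below $m$, whereas here such levels exist --- they just do not approach $m$. So your case analysis misses the case $\{q\in\mathcal{I}\mid q<m\}\neq\emptyset$ but $\sup\{q\in\mathcal{I}\mid q<m\}<m$, and in that case the proof as written does not deliver the claim.

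The hole can be closed with tools you already have: if $\sup\{q\in\mathcal{I}\mid q<m\}<m$, then the set $\{q\in\mathcal{I}\mid X_{i}\in\mathcal{A}_{q}\}$, whose supremum is $\geq m$, can have no elements in the gap, which forces $m\in\mathcal{I}$ and (by nesting from a witness $\geq m$) $X_{i}\in\mathcal{A}_{m}$ for both $i$; then convexity at level $m$ concludes, exactly as in your degenerate case. A cleaner route --- and essentially the paper's proof --- is to never discard the witnesses: for $\varepsilon>0$ pick $q_{i}\in\mathcal{I}$ with $q_{i}>\phi_{\mathbb{F}}(X_{i})-\varepsilon\geq m-\varepsilon$ and $X_{i}\in\mathcal{A}_{q_{i}}$, set $q_{*}:=\min(q_{1},q_{2})\in\mathcal{I}$, use the increasing-in-$q$ property (your 1(i)) to get $X_{1},X_{2}\in\mathcal{A}_{q_{*}}$, apply convexity to get $\lambda X_{1}+(1-\lambda)X_{2}\in\mathcal{A}_{q_{*}}$, hence $\phi_{\mathbb{F}}(\lambda X_{1}+(1-\lambda)X_{2})\geq q_{*}>m-\varepsilon$, and let $\varepsilon\downarrow 0$. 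The advantage is that every level invoked is an actual element of $\mathcal{I}$, so no assumption on the structure of $\mathcal{I}$ near $m$ is needed. (The paper's write-up compares with $f_{m-\varepsilon}$, which is open to the same quibble when $m-\varepsilon\notin\mathcal{I}$, but retaining $q_{*}$ repairs it verbatim; your argument, by contrast, genuinely needs the extra case.)
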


\begin{proof}

\begin{description}
\item[1)] The proof\ of the monotonicity and convexity of $\mathcal{A}_{%
\mathbb{F}}$ follows from the definition.

\item[2.i)] If $X_{1}\leq X_{2}$, then $X_{1}\geq f_{q}$ implies $X_{2}\geq
f_{q}.$ Hence $\left\{ q\in \mathcal{I}\mid X_{1}\geq f_{q}\right\}
\subseteq \left\{ q\in \mathcal{I}\mid X_{2}\geq f_{q}\right\} $.

\item[2.ii)] Let $\phi _{\mathbb{F}}(X_{1})\geq m$ and $\phi _{\mathbb{F}%
}(X_{2})\geq m$. By definition of $\phi _{\mathbb{F}}$, $\forall \varepsilon
>0$ $\exists q_{i}$ s.t. $X_{i}\geq f_{q_{i}}$\ and $q_{i}>\phi _{\mathbb{F}%
}(X_{i})-\varepsilon \geq m-\varepsilon $. Then $X_{i}\geq f_{q_{i}}\geq
f_{m-\varepsilon }$, as $\left\{ f_{q}\right\} _{q}$ is an increasing
family, and therefore $\lambda X_{1}+(1-\lambda )X_{2}\geq f_{m-\varepsilon
} $. As this holds for any $\varepsilon >0$, we conclude that $\phi _{%
\mathbb{F}}(\lambda X_{1}+(1-\lambda )X_{2})\geq m$ and $\phi _{\mathbb{F}}$
is quasi-concave.
\end{description}
\end{proof}

\bigskip

It is obviously reasonable that a SRM should be \emph{increasing:} if the
citations of a researcher $X_{2}$ dominate the citations of another
researcher $X_{1}$, publication by publication, then $X_{2}$ has a
performance greater than $X_{1}$.

\begin{example}
We show that a SRM is not in general quasi-convex, that is $\phi _{\mathbb{F}%
}(\lambda X_{1}+(1-\lambda )X_{2})\leq \max (\phi _{\mathbb{F}}(X_{1}),$ $%
\phi _{\mathbb{F}}(X_{2}))$ for all $\lambda \in \lbrack 0,1].$ Consider two
vectors, $X_{1}=[8\quad 6\quad 4\quad 2]$ and $X_{2}=[4\quad 2\quad 2\quad
2\quad 2]$, where $X_{2}$ has more publications than $X_{1}$ but less cited.
Computing, for example, the $w$-index we obtain that $\phi _{\mathbb{F}%
_{w}}(X_{1})=4$ and $\phi _{\mathbb{F}_{w}}(X_{2})=3$, while for the convex
combination $X=\frac{1}{2}X_{1}+\frac{1}{2}X_{2}=$ $\left[ 6\quad 4\quad
3\quad 2\quad 1\right] $ we have: $\phi _{\mathbb{F}_{w}}(X)=5$.
\end{example}

\subsection{Additional properties of SRMs}

We have seen that all the SRMs $\phi _{\mathbb{F}}$ share the same
structural properties of monotonicity and quasiconcavity. Of course other
relevant properties of $\phi _{\mathbb{F}}$ could be considered, which could
also be built in from the corresponding features of the family of
performance curves. In this section we show that this is the case for the
behavior of $\phi _{\mathbb{F}}$\ with respect to the addition of citations
(C-additivity) to existing papers.

\begin{definition}
A SRM $\phi _{\mathbb{F}}:\mathcal{X}^{+}\rightarrow \mathbb{R}$ is:

\begin{description}
\item[a)] $\mathrm{C}$-superadditive if $\phi _{\mathbb{F}}(X+m)\geq \phi _{%
\mathbb{F}}(X)+m$ for all $m\in \mathbb{R}_{+}$;

\item[b)] $\mathrm{C}$-subadditive if $\phi _{\mathbb{F}}(X+m)\leq \phi _{%
\mathbb{F}}(X)+m$ for all $m\in \mathbb{R}_{+}$;

\item[c)] $\mathrm{C}$-additive if $\phi _{\mathbb{F}}(X+m)=\phi _{\mathbb{F}%
}(X)+m$ for all $m\in \mathbb{R}_{+}$.
\end{description}
\end{definition}

\begin{definition}
A a family $\mathbb{F}=\left\{ f_{q}\right\} _{q}$ of performance curves is:

\begin{description}
\item[a)] \emph{slowly increasing in q} if $f_{q+m}-f_{q}\leq m$ for all $%
m\in \mathbb{R}_{+}$;

\item[b)] \emph{fast increasing in q} if $f_{q+m}-f_{q}\geq m$ for all $m\in 
\mathbb{R}_{+}$;

\item[c)] \emph{linear increasing in q} if $f_{q+m}-f_{q}=m$ for all $m\in 
\mathbb{R}_{+}$.
\end{description}
\end{definition}

These properties of the family of performance curves can be express in terms
of corresponding properties of the family $\mathcal{A}_{\mathbb{F}}$ of
performance sets.

\begin{lemma}
The family $\mathbb{F}$ of performance curve is slowly (resp. fast, linear)
increasing in $q,$ if and only if 
\begin{equation}
\mathcal{A}_{q}+m\subseteq \mathcal{A}_{q+m}\text{ (resp. }\mathcal{A}%
_{q+m}\subseteq \mathcal{A}_{q}+m\text{, }\mathcal{A}_{q+m}=\mathcal{A}%
_{q}+m)  \label{inclus1}
\end{equation}%
for all $m\in \mathbb{R}_{+}$ and $q\in \mathcal{I}.$
\end{lemma}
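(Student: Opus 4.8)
The plan is to turn each inequality between curves into an inclusion between sets by means of a single translation identity for performance sets. First I would fix the meaning of the shift: for $m\in \mathbb{R}_{+}$, write $\mathcal{A}_{q}+m=\{Y+m\mid Y\in \mathcal{A}_{q}\}$, where $Y+m$ adds $m$ citations on $(0,\infty )$ and leaves the curve null on $(-\infty ,0]$, so that $\mathcal{X}^{+}$ is preserved. The heart of the argument is then the identity
\[
\mathcal{A}_{q}+m=\{X\in \mathcal{X}^{+}\mid X(x)\geq f_{q}(x)+m\ \text{for all}\ x>0\}.
\]
The inclusion $\subseteq $ is immediate, since $Y\geq f_{q}$ gives $Y+m\geq f_{q}+m$. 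For $\supseteq $ I would note that if $X\geq f_{q}+m$ on $(0,\infty )$ then, because $f_{q}\geq 0$ (property iii), we have $X\geq m$ there, so $Y:=X-m\,1_{(0,\infty )}$ is still nonnegative, decreasing, bounded with finitely many values, and null on $(-\infty ,0]$; hence $Y\in \mathcal{X}^{+}$, $Y\geq f_{q}$, and $X=Y+m\in \mathcal{A}_{q}+m$. It is precisely the nonnegativity of $f_{q}$ together with $X\in \mathcal{X}^{+}$ that keeps the downward shift inside the cone.

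With this identity in hand the sufficiency directions are one-line pointwise comparisons. If $\mathbb{F}$ is slowly increasing, $f_{q+m}\leq f_{q}+m$, so any $X$ with $X\geq f_{q}+m$ also satisfies $X\geq f_{q+m}$; by the identity and the definition of $\mathcal{A}_{q+m}$ this reads $\mathcal{A}_{q}+m\subseteq \mathcal{A}_{q+m}$. Symmetrically, if $\mathbb{F}$ is fast increasing, $f_{q+m}\geq f_{q}+m$ yields $\mathcal{A}_{q+m}\subseteq \mathcal{A}_{q}+m$, and the linear case is the conjunction of the two.

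For the converse (necessity) I would test the inclusions on the minimal elements of the two sets. Here I use that a performance curve is a decreasing bounded step function (as in every example of Section 2.1, or after replacing $f_{q}$ by its least decreasing majorant, which does not change $\mathcal{A}_{q}$), so that $f_{q}+m\,1_{(0,\infty )}$ is the smallest element of $\mathcal{A}_{q}+m$ and $f_{q+m}$ the smallest element of $\mathcal{A}_{q+m}$. Then $\mathcal{A}_{q}+m\subseteq \mathcal{A}_{q+m}$ forces $f_{q}+m\in \mathcal{A}_{q+m}$, i.e.\ $f_{q}+m\geq f_{q+m}$ (slowly increasing); and $\mathcal{A}_{q+m}\subseteq \mathcal{A}_{q}+m$ forces, via the identity, $f_{q+m}\geq f_{q}+m$ (fast increasing); combining the two gives the linear equivalence.

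The step I expect to be delicate is not the pointwise algebra but the bookkeeping in the shift identity and in the converse: one must ensure that the translated and the minimal functions genuinely lie in $\mathcal{X}^{+}$ (nonnegative, decreasing, null on $(-\infty ,0]$), which is exactly where properties (ii) and (iii) of performance curves and the nonnegativity built into $\mathcal{X}^{+}$ are used. Everything else reduces to the monotone principle ``$X\geq g$ for all admissible $X$ iff $g$ dominates the relevant curve,'' applied to these minimal elements.
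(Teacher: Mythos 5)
Your proof takes essentially the same route as the paper's: the translation identity $\mathcal{A}_{q}+m=\left\{ X\mid X\geq f_{q}+m\right\} $, a pointwise comparison of curves for the ``if'' direction, and testing the inclusion on the minimal element $f_{q}+m$ for the ``only if'' direction. You are in fact more careful than the paper on the two points its proof glosses over: the paper's identity $\left\{ X+m\mid X\geq f_{q}\right\} =\left\{ X\mid X\geq f_{q}+m\right\} $ tacitly uses translation invariance of the ambient space, which fails in the cone $\mathcal{X}^{+}$ and which you repair by shifting only on $(0,\infty )$ and using $f_{q}\geq 0$ to keep $X-m1_{(0,\infty )}$ in the cone; and the paper's final step (``$X\geq f_{q}+m$ implies $X\geq f_{q+m}$, hence $f_{q}+m\geq f_{q+m}$'') silently plugs in $X=f_{q}+m$ as a test function, which is legitimate only when $f_{q}$ is itself an admissible citation curve --- exactly the hypothesis you make explicit (bounded decreasing step function, as in all the paper's examples).

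The one flaw is your parenthetical claim that one may instead replace $f_{q}$ by its least decreasing majorant $\widehat{f}_{q}$ without loss of generality. That replacement does leave $\mathcal{A}_{q}$ unchanged, but the minimal-element test then yields only $\widehat{f}_{q}+m\geq \widehat{f}_{q+m}$, which does not imply $f_{q}+m\geq f_{q+m}$ when $f_{q}$ is not decreasing. Concretely, take $\mathcal{I}=\{1,2\}$, $f_{1}=1_{(1,2]}$, $f_{2}=2\cdot 1_{(0,2]}$, $m=1$: every $X\in \mathcal{A}_{1}$ is decreasing and $\geq 1$ on $(1,2]$, hence $\geq 1$ on all of $(0,2]$, so every element of $\mathcal{A}_{1}+1$ is $\geq 2$ on $(0,2]$ and therefore $\mathcal{A}_{1}+1\subseteq \mathcal{A}_{2}$; yet $f_{2}-f_{1}=2>m$ on $(0,1]$, so the family is not slowly increasing. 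Thus in the $\mathcal{X}^{+}$ setting the ``only if'' direction genuinely requires the performance curves themselves to be decreasing --- a restriction the paper's own proof also needs but never states. You should delete the majorant parenthetical and keep the regularity hypothesis on $\mathbb{F}$ explicit; with that amendment your argument is correct and, if anything, tighter than the paper's.
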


\begin{proof}
In order to show that $\mathcal{A}_{q}+m\subseteq \mathcal{A}_{q+m}$ we
observe that: 
\begin{align*}
\mathcal{A}_{q+m}& :=\left\{ X\mid X\geq f_{q+m}\right\} \\
\mathcal{A}_{q}+m& =\left\{ X+m\mid X\geq f_{q}\right\} \\
& =\left\{ X\mid X\geq f_{q}+m\right\} .
\end{align*}%
From $f_{q}+m\geq f_{q+m}$, we deduce that $X\geq f_{q}+m$ implies $X\geq
f_{q+m}$. Hence $X\in \mathcal{A}_{q}+m$ $\Longrightarrow $ $X\in \mathcal{A}%
_{q+m}.$

Regarding the other implication, we know that if $X\in \mathcal{A}_{q}+m$
then $X\in \mathcal{A}_{q+m}$, that is $X\geq f_{q}+m$ implies $X\geq
f_{q+m} $. This implies that $f_{q}+m\geq f_{q+m}$. Similarly for the other
cases.
\end{proof}

\bigskip

\begin{lemma}
If a family $\mathbb{F}$ of performance curves is slowly (resp. fast,
linear) increasing in $q,$ then $\phi _{\mathbb{F}}$ is $\mathrm{C}$%
-superadditive (resp. $\mathrm{C}$-subadditive, $\mathrm{C}$-additive).
\end{lemma}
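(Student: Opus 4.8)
The plan is to prove the C-superadditive case and then indicate how the other two cases follow by the same reasoning. The key idea is that the geometric inclusion between performance sets established in the previous lemma translates directly into the additivity inequality for $\phi_{\mathbb{F}}$, via the variational definition of the SRM as a supremum over levels.

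Concretely, fix $X \in \mathcal{X}^{+}$ and $m \in \mathbb{R}_{+}$, and suppose $\mathbb{F}$ is slowly increasing in $q$. By the preceding lemma this is equivalent to $\mathcal{A}_{q} + m \subseteq \mathcal{A}_{q+m}$ for all $q$. I would start from the defining relation $\phi_{\mathbb{F}}(X) = \sup\{q \in \mathcal{I} \mid X \in \mathcal{A}_{q}\}$ and argue as follows: for any level $q$ with $X \in \mathcal{A}_{q}$, adding $m$ gives $X + m \in \mathcal{A}_{q} + m \subseteq \mathcal{A}_{q+m}$, so $X+m$ belongs to the performance set at level $q+m$. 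Hence the level $q+m$ is admissible in the supremum defining $\phi_{\mathbb{F}}(X+m)$, which yields $\phi_{\mathbb{F}}(X+m) \geq q + m$. Taking the supremum over all admissible $q$ for $X$ produces $\phi_{\mathbb{F}}(X+m) \geq \phi_{\mathbb{F}}(X) + m$, which is exactly C-superadditivity. The fast-increasing case is symmetric: from $\mathcal{A}_{q+m} \subseteq \mathcal{A}_{q} + m$ one shows that every level admissible for $X+m$ corresponds, after subtracting $m$, to a level admissible for $X$, giving $\phi_{\mathbb{F}}(X+m) \leq \phi_{\mathbb{F}}(X) + m$. The linear case then follows by combining both inclusions.

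The one technical subtlety I would be careful about is that the supremum defining $\phi_{\mathbb{F}}$ need not be attained, so I cannot simply pick a maximizing level $q$ and substitute. The clean way around this is to work with the supremum directly, as above: the inclusion $\mathcal{A}_{q}+m \subseteq \mathcal{A}_{q+m}$ holds for \emph{every} $q$, so I establish the inequality $\phi_{\mathbb{F}}(X+m) \geq q+m$ uniformly over all $q$ admissible for $X$ and then pass to the sup. Equivalently, one can phrase this via an $\varepsilon$-argument mirroring the proof of quasiconcavity in Proposition \ref{PROP}: for every $\varepsilon > 0$ there is an admissible level $q$ with $q > \phi_{\mathbb{F}}(X) - \varepsilon$, giving $\phi_{\mathbb{F}}(X+m) \geq q + m > \phi_{\mathbb{F}}(X) + m - \varepsilon$, and letting $\varepsilon \to 0$ closes the gap.

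The main obstacle is therefore not analytic difficulty but the bookkeeping around the non-attained supremum and ensuring the index set $\mathcal{I}$ is closed under the shift $q \mapsto q+m$ wherever it is invoked (otherwise the level $q+m$ may fall outside $\mathcal{I}$ and be inadmissible in the supremum). I would either assume this closure implicitly, as the statement seems to intend, or note that since $\phi_{\mathbb{F}}$ takes values in $\mathbb{R}$ via the supremum, the argument goes through on the relevant range. Granting this, the proof reduces entirely to chaining the set inclusion \eqref{inclus1} with the monotone supremum, and the whole argument is short.
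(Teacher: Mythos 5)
Your proof is correct and is essentially the paper's own argument: both reduce the claim to the observation that every level $q$ admissible for $X$ makes $q+m$ admissible for $X+m$ (and conversely in the fast-increasing case), and then pass to the supremum, so attainment of the sup is never needed. The only difference is presentational: you route the key step through the preceding lemma's inclusion $\mathcal{A}_{q}+m\subseteq \mathcal{A}_{q+m}$, while the paper performs a change of variable inside the supremum and uses the equivalent curve inequality $f_{q}\geq f_{q+m}-m$ directly; your caveat about $\mathcal{I}$ being stable under the shift $q\mapsto q+m$ is likewise implicit in the paper's definition of slowly (fast, linear) increasing families.
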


\begin{proof}
In order to show that $\phi _{\mathbb{F}}(X+m)-m\geq \phi _{\mathbb{F}}(X)$
for all $m\in \mathbb{R}_{+}$ we use the definition in (\ref{SRM}) and we
observe that 
\begin{eqnarray*}
\phi _{\mathbb{F}}(X+m)-m &=&\sup \left\{ q\mid X+m\geq f_{q}\right\} -m \\
&=&\sup \left\{ q-m\mid X\geq f_{q}-m\right\} \\
&=&\sup \left\{ q\mid X\geq f_{q+m}-m\right\} .
\end{eqnarray*}%
Hence it's sufficient to show that $\left\{ q\mid \text{ }X\geq
f_{q}\right\} \subseteq \left\{ q\mid X\geq f_{q+m}-m\right\} $ and this is
true since $f_{q}\geq f_{q+m}-m$. Similarly for the other cases.
\end{proof}

\bigskip

As shown in the following examples, the reverse implication in the above
Lemma does not hold true.

\begin{example}
.

\begin{itemize}
\item The $h$-index in the example (\ref{exh}) is a $\mathrm{C}$-subadditive
SRM, but the associated family $\mathbb{F}$ of performance curves defined in
(\ref{fh}) is not fast increasing in $q$, nor slowly increasing. Indeed the
family is linear increasing on the Hirsch core $[0,h],$ but not outside it.

\item The same considerations hold for the $h^{2}$- and $h_{\alpha }$- index
(see examples (\ref{h^2Index}) and (\ref{haIndex})).

\item The family $\mathbb{F}$ defined in Example \ref{windex}, associated to
the $w$-index, is slowly increasing in $q$. This condition is sufficient to
say that the $w$-index is a $\mathrm{C}$-superadditive SRM.

\item The maximum number of citations of an article (see example \ref{cmax})
is a $\mathrm{C}$-additive SRM, even if the family $\mathbb{F}$ of
performance curves defined in \ref{fcmax} is not linear increasing in $q$.
This property holds only on $[0,1]$, since the performance curves are equal
to zero outside.

\item The total number of publications (see example \ref{p}) is a $\mathrm{C}
$-superadditive SRM since the family $\mathbb{F}$ of performance curves
defined in \ref{fp} is slowly increasing in $q$.
\end{itemize}
\end{example}

A further property concerns the addition of a single publication to the
author's citation record.

\begin{definition}
Let $p$ be the maximum number of publications with at least one citation$,$
so that $p$ satisfies: $X=X1_{(0,p]}.$ A SRM $\phi _{\mathbb{F}}:\mathcal{X}%
^{+}\rightarrow \mathbb{R}$\ is

\begin{description}
\item[a)] $\mathrm{P}$-superadditive if $\phi_{\mathbb{F}}(X+1_{\left\{
p+1\right\} })\geq\phi_{\mathbb{F}}(X)+1$;

\item[b)] $\mathrm{P}$-subadditive if $\phi_{\mathbb{F}}(X+1_{\left\{
p+1\right\} })\leq\phi_{\mathbb{F}}(X)+1$;

\item[c)] $\mathrm{P}$-additive if $\phi_{\mathbb{F}}(X+1_{\left\{
p+1\right\} })=\phi_{\mathbb{F}}(X)+1$;

\item[c)] $\mathrm{P}$-invariance if $\phi_{\mathbb{F}}(X+1_{\left\{
p+1\right\} })=\phi_{\mathbb{F}}(X)$.
\end{description}
\end{definition}

A SRM is $\mathrm{P}$-superadditive if the addition of a new publication
with one citation leads to an increase of the measure more than linear. Many
known SRMs are $\mathrm{P}$-invariance (i.e. the $c_{\max }$, $h$-, $h^{2}$-
and $h_{\alpha }$-index in the examples (\ref{cmax}) (\ref{exh}), (\ref%
{h^2Index}) and (\ref{haIndex})) as the addition of one citation to a new
publication leaves the SRM invariant. The $w$-index (in the example (\ref%
{windex})) is $\mathrm{P}$-subadditive as the addition of one citation to a
new publication makes it greater at most of $1$ unit. While the total number
of publications $p$ with at least one citation (in the example (\ref{p})) is
clearly $\mathrm{P}$-additive.

\section{On the Dual Representation of the SRM}

The goal of this section is to provide a dual representation of the SRM. To
this scope, we need some topological structure. Let $(\mathbb{R},\mathcal{B}(%
\mathbb{R}),\mu )$ be a probability space, where $\mathcal{B}(\mathbb{R})$
is the $\sigma $-algebra of the Borel sets, $\mu $ is a probability measure
on $\mathcal{B}(\mathbb{R})$. Since the citation curve of an author $X$ is a
bounded function, it appears natural to take $X\in L^{\infty }(\mathbb{R},%
\mathcal{B}(\mathbb{R}),\mu )$, where $L^{\infty }(\mathbb{R},\mathcal{B}(%
\mathbb{R}),\mu )$ is the space of $\mathcal{B}(\mathbb{R})$-measurable
functions that are $\mu $ almost surely bounded. If we endow $L^{\infty }$
with the weak topology $\sigma (L^{\infty },L^{1})$ then $L^{1}=(L^{\infty
},\sigma (L^{\infty },L^{1}))^{\prime }$ is its topological dual. In the
dual pairing $(L^{\infty },L^{1},\langle \cdot ,\cdot \rangle )$ the
bilinear form $\langle \cdot ,\cdot \rangle :L^{\infty }\times
L^{1}\rightarrow \mathbb{R}$ is given by $\langle X,Z\rangle =E[ZX]$, the
linear function $X\mapsto E[ZX]$, with $Z\in L^{1}$, is $\sigma (L^{\infty
},L^{1})$ continuous and $(L^{\infty },\sigma (L^{\infty },L^{1}))$ is a
locally convex topological vector space. In this framework, each element of
a performance family\emph{\ \ }$\mathbb{F=}\left\{ f_{q}\right\} _{q}$ is a $%
\mathcal{B}(\mathbb{R})$-measurable function, the inequalities between
random variables are meant to hold $\mu $-a.s., and we set:%
\begin{eqnarray}
\mathcal{A}_{q} &:&=\left\{ X\in L^{\infty }\mid X\geq f_{q}\right\} , 
\notag \\
\phi _{\mathbb{F}}(X) &:&=\sup \left\{ q\in \mathcal{I}\mid X\in \mathcal{A}%
_{q}\right\} .  \label{phi}
\end{eqnarray}%
We have seen in the Section 1 that the SRM is a quasi-concave and monotone
map. Under appropriate continuity assumptions, the dual representation of
these type of maps can be found in \cite{PV},\cite{Volle}, \cite{CMMMa}.

\begin{definition}
A map $\phi :L^{\infty }(\mathbb{R})\rightarrow \overline{\mathbb{R}}$ is $%
\sigma (L^{\infty },L^{1})-$upper-semicontinuous if the upper level sets%
\begin{equation*}
\left\{ X\in L^{\infty }(\mathbb{R})\mid \phi (X)\geq q\right\} 
\end{equation*}%
are $\sigma (L^{\infty },L^{1})-$closed for all $q\in \mathbb{R}$.
\end{definition}

\begin{lemma}
\label{Aqchiuso}If $\mathcal{A}_{\mathbb{F}}=\left\{ \mathcal{A}_{q}\right\}
_{q}$ is a family of performance sets then $\mathcal{A}_{q}$ is $\sigma
(L^{\infty },L^{1})$-closed for any $q$.
\end{lemma}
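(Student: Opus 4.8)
The plan is to show that each performance set
\[
\mathcal{A}_{q}=\left\{X\in L^{\infty}\mid X\geq f_{q}\ \mu\text{-a.s.}\right\}
\]
is $\sigma(L^{\infty},L^{1})$-closed. Since $\mathcal{A}_{q}$ is convex (by Proposition \ref{PROP}, part 1), I can invoke the standard functional-analytic fact that for a \emph{convex} subset of $L^{\infty}$, weak$^{*}$ closedness $\sigma(L^{\infty},L^{1})$ coincides with norm closedness. Thus it suffices to prove that $\mathcal{A}_{q}$ is closed in the $\|\cdot\|_{\infty}$ norm, which is a far more tractable task. (A clean way to record this reduction is via the Krein--Smulian theorem, or simply by noting that on a convex set the two topologies have the same closed sets because $L^{1}$ separates points of $L^{\infty}$.)

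**The norm-closedness step.** First I would take a sequence $X_{n}\in\mathcal{A}_{q}$ with $X_{n}\to X$ in $L^{\infty}$, i.e. $\|X_{n}-X\|_{\infty}\to 0$. By definition $X_{n}\geq f_{q}$ $\mu$-a.s. for every $n$. Norm convergence in $L^{\infty}$ gives, for each $n$, a null set off which $|X_{n}-X|\leq\|X_{n}-X\|_{\infty}$; taking the countable union of these null sets together with the null sets where $X_{n}<f_{q}$, I obtain a single $\mu$-null set $N$ such that for all $x\notin N$ and all $n$ one has $X(x)\geq X_{n}(x)-\|X_{n}-X\|_{\infty}\geq f_{q}(x)-\|X_{n}-X\|_{\infty}$. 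Letting $n\to\infty$ yields $X(x)\geq f_{q}(x)$ for all $x\notin N$, so $X\geq f_{q}$ $\mu$-a.s. and $X\in\mathcal{A}_{q}$. This shows $\mathcal{A}_{q}$ is norm-closed, and combined with convexity, weak$^{*}$-closed.

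**The anticipated obstacle.** The conceptual content all sits in the reduction from weak$^{*}$ to norm closedness; the norm-closedness computation itself is routine. The step I would be most careful about is justifying that convexity lets me replace the weak$^{*}$ topology with the norm topology: this is exactly the point where one needs either the Hahn--Banach separation theorem (to show a norm-closed convex set is weak$^{*}$-closed, one would normally need weak$^{*}$ lower level sets, and $L^{1}$ is not the full norm-dual of $L^{\infty}$) or, more precisely, the Krein--Smulian / Dunford--Pettis machinery that makes $\sigma(L^{\infty},L^{1})$-closedness of convex sets equivalent to closedness in the Mackey (hence norm) topology. I would state this equivalence explicitly, cite it, and note that convexity of $\mathcal{A}_{q}$ is the hypothesis that makes it applicable; without convexity the two notions of closedness genuinely differ. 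One small additional check is that $f_{q}$ being $\mathcal{B}(\mathbb{R})$-measurable makes the defining inequality well-posed $\mu$-a.s., which the surrounding setup already guarantees.
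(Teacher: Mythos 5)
There is a genuine gap, and it sits exactly at the step you flagged: the claimed equivalence between norm closedness and $\sigma(L^{\infty},L^{1})$-closedness for convex subsets of $L^{\infty}$ is false. Hahn--Banach separation shows that a norm-closed convex set is closed in the weak topology $\sigma(L^{\infty},(L^{\infty})^{*})$, but $\sigma(L^{\infty},L^{1})$ is the much coarser weak-star topology and $L^{1}$ is a proper subspace of the norm-dual $(L^{\infty})^{*}$ (except in the finite-dimensional case): the separating functional produced by Hahn--Banach may be a singular element of $(L^{\infty})^{*}$ not representable by any $Z\in L^{1}$. Concretely, the kernel of any $\Lambda\in(L^{\infty})^{*}\setminus L^{1}$ is convex and norm-closed but $\sigma(L^{\infty},L^{1})$-dense; likewise the unit ball of $c_{0}$ inside $\ell^{\infty}=(\ell^{1})^{*}$ is convex and norm-closed yet weak-star dense in the unit ball of $\ell^{\infty}$. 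Your appeal to Mackey/Krein--Smulian machinery does not repair this: by Mackey--Arens, convex sets have the same closure in $\sigma(L^{\infty},L^{1})$ and in the Mackey topology $\tau(L^{\infty},L^{1})$, but $\tau(L^{\infty},L^{1})$ is strictly coarser than the norm topology (the norm topology is not compatible with the pairing against $L^{1}$), so \emph{norm-closed does not imply Mackey-closed}; and Krein--Smulian would require you to show that $\mathcal{A}_{q}\cap\left\{ \parallel X\parallel _{\infty }\leq r\right\}$ is weak-star closed, which is not what your $L^{\infty}$-norm computation delivers. So the reduction fails, and with it the proof; the norm-closedness computation itself is correct but aims at the wrong topology.

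Two ways to close the gap. The paper's own proof is direct and needs no convexity at all: if $Y_{\alpha}\in\mathcal{A}_{q}$ is a net with $Y_{\alpha}\rightarrow Y$ in $\sigma(L^{\infty},L^{1})$ and $B:=\left\{ Y<f_{q}\right\}$ had $\mu(B)>0$, then testing against $Z=1_{B}\in L^{1}$ gives $E[1_{B}f_{q}]\leq E[1_{B}Y_{\alpha}]\rightarrow E[1_{B}Y]<E[1_{B}f_{q}]$, a contradiction. Equivalently, $\mathcal{A}_{q}=\bigcap_{Z\in L_{+}^{1}}\left\{ X\in L^{\infty}\mid E[ZX]\geq E[Zf_{q}]\right\}$ is an intersection of weak-star closed half-spaces, since each $X\mapsto E[ZX]$ with $Z\in L^{1}$ is $\sigma(L^{\infty},L^{1})$-continuous by construction of the pairing (the reverse inclusion follows by testing with $Z=1_{\left\{ X<f_{q}\right\} }$). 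Alternatively, if you want to keep your architecture, you must run it one level down: by Krein--Smulian it suffices to show that $C_{r}:=\mathcal{A}_{q}\cap\left\{ \parallel X\parallel _{\infty }\leq r\right\}$ is weak-star closed; since $C_{r}$ is convex, bounded and closed in $L^{1}$-norm (your sequence argument, run with $L^{1}$-convergence and an a.s.-convergent subsequence), it is $\sigma(L^{1},L^{\infty})$-closed --- here Hahn--Banach is legitimate because $L^{\infty}$ genuinely is the dual of $L^{1}$ --- and the continuity of the embedding of $(L^{\infty},\sigma(L^{\infty},L^{1}))$ into $(L^{1},\sigma(L^{1},L^{\infty}))$ then makes $C_{r}$ weak-star closed. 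This second route is exactly the machinery the paper reserves for the harder Lemma \ref{CFA}; for the present lemma the direct half-space argument is both shorter and more general.
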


\begin{proof}
To prove that $\mathcal{A}_{q}$ is $\sigma (L^{\infty },L^{1})$-closed let $%
Y_{\alpha }\in \mathcal{A}_{q}$ be a net satisfying $Y_{\alpha }\overset{%
\sigma (L^{\infty },L^{1})}{\rightarrow }Y\in L^{\infty }$. By
contradiction, suppose that $\mu (B)>0$ where $B:=\left\{ Y<f_{q}\right\}
\in \mathcal{B}(\mathbb{R})$. Taking as a continuous linear functional $%
Z=1_{B}\in L^{1}$, from $Y_{\alpha }\overset{\sigma (L^{\infty },L^{1})}{%
\rightarrow }Y$ we deduce: $E[1_{B}f_{q}]\leq E[1_{B}Y_{\alpha }]\rightarrow
E[1_{B}Y]<E[1_{B}f_{q}]$.
\end{proof}

\bigskip

The following proposition shows the relation between the continuity property
of the family $\mathbb{F}$ of performance curves, those of the family $%
\mathcal{A}_{\mathbb{F}}$ of performance sets and those of the SRM $\phi _{%
\mathbb{F}}.$

\begin{proposition}
\label{contin}Let $\mathbb{F}$ be a family of performance curves. If $%
\mathbb{F}$ is \emph{left continuous in} $q$, that is 
\begin{equation*}
f_{q-\varepsilon }(x)\uparrow f_{q}(x)\text{ for }\varepsilon \downarrow 0,%
\text{ }\mu -a.s,
\end{equation*}%
then:

\begin{enumerate}
\item $\mathcal{A}_{\mathbb{F}}$ is left-continuous in $q$, that is%
\begin{equation*}
\mathcal{A}_{q}=\bigcap\limits_{\epsilon >0}\mathcal{A}_{q-\varepsilon }.
\end{equation*}

\item 
\begin{equation}
\mathcal{A}_{q}=\left\{ X\in L^{\infty }\mid \phi _{\mathbb{F}}(X)\geq
q\right\} \text{, for all }q\in \mathcal{I}.  \label{Aq=upplev}
\end{equation}

\item $\phi _{\mathbb{F}}$ is $\sigma (L^{\infty },L^{1})$%
-upper-semicontinuous.
\end{enumerate}
\end{proposition}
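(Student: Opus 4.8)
The plan is to prove the three claims in sequence, since each naturally feeds the next.

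First I would establish claim (1), the left-continuity of $\mathcal{A}_{\mathbb{F}}$. The inclusion $\mathcal{A}_q \subseteq \bigcap_{\varepsilon>0}\mathcal{A}_{q-\varepsilon}$ is immediate from monotonicity of the family (part (i) of Proposition \ref{PROP}, which gives $\mathcal{A}_q \subseteq \mathcal{A}_{q-\varepsilon}$ for every $\varepsilon>0$). For the reverse inclusion, suppose $X \in \mathcal{A}_{q-\varepsilon}$ for all $\varepsilon>0$, i.e. $X \geq f_{q-\varepsilon}$ $\mu$-a.s. for every $\varepsilon>0$. Here is where the hypothesis of left continuity in $q$ enters: since $f_{q-\varepsilon}(x) \uparrow f_q(x)$ $\mu$-a.s. as $\varepsilon \downarrow 0$, passing to the limit along a countable sequence $\varepsilon_n \downarrow 0$ yields $X \geq f_q$ $\mu$-a.s., so $X \in \mathcal{A}_q$. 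The only subtlety is that the a.s. inequalities hold on sets that may depend on $\varepsilon$, so I would take a countable sequence $\varepsilon_n \downarrow 0$, intersect the corresponding null-complement sets, and use monotone convergence of $f_{q-\varepsilon_n}$ on the resulting full-measure set.

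Next I would prove claim (2), the identity $\mathcal{A}_q = \{X \mid \phi_{\mathbb{F}}(X) \geq q\}$. The inclusion $\mathcal{A}_q \subseteq \{X \mid \phi_{\mathbb{F}}(X) \geq q\}$ is direct from the definition (\ref{phi}): if $X \in \mathcal{A}_q$ then $q$ belongs to the set $\{p \in \mathcal{I} \mid X \in \mathcal{A}_p\}$ whose supremum is $\phi_{\mathbb{F}}(X)$, so $\phi_{\mathbb{F}}(X) \geq q$. For the converse, suppose $\phi_{\mathbb{F}}(X) \geq q$. By definition of the supremum, for every $\varepsilon>0$ there is $p \in \mathcal{I}$ with $p > q-\varepsilon$ and $X \in \mathcal{A}_p$; since $p \geq q-\varepsilon$, decreasing monotonicity (Proposition \ref{PROP}(i)) gives $X \in \mathcal{A}_{q-\varepsilon}$. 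As this holds for all $\varepsilon>0$, the left-continuity from claim (1) yields $X \in \bigcap_{\varepsilon>0}\mathcal{A}_{q-\varepsilon} = \mathcal{A}_q$. This is the step I expect to be the crux, precisely because it is where claim (1) is used to convert an approximating family of memberships into exact membership.

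Finally, claim (3) follows by assembling the pieces. The upper level sets of $\phi_{\mathbb{F}}$ are exactly the sets $\{X \mid \phi_{\mathbb{F}}(X) \geq q\}$, which by claim (2) coincide with $\mathcal{A}_q$. By Lemma \ref{Aqchiuso}, each $\mathcal{A}_q$ is $\sigma(L^{\infty},L^1)$-closed. Hence every upper level set of $\phi_{\mathbb{F}}$ is closed, which is precisely the definition of $\sigma(L^{\infty},L^1)$-upper-semicontinuity. No additional estimation is needed here; the work is entirely front-loaded into claims (1) and (2), and the left-continuity hypothesis on $\mathbb{F}$ is exactly what bridges the gap between the defining supremum and the closedness already available from Lemma \ref{Aqchiuso}.
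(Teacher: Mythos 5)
Your proposal is correct and follows the same decomposition as the paper: claim (1) first, then claim (2) deduced from claim (1) via the supremum property, then claim (3) by identifying the upper level sets with the $\mathcal{A}_q$ and invoking the closedness from Lemma \ref{Aqchiuso}. Your arguments for claims (2) and (3) are essentially identical to the paper's. The one place where you diverge is the reverse inclusion $\bigcap_{\varepsilon>0}\mathcal{A}_{q-\varepsilon}\subseteq\mathcal{A}_{q}$ in claim (1): the paper argues by contradiction, producing a set $B$ of positive measure on which $f_{q}>X+\delta$, then using the a.s. convergence $f_{q-\varepsilon}\uparrow f_{q}$ to find a set $C$ with $\mu(C)>1-\mu(B)$ on which $f_{q-\overline{\varepsilon}}>f_{q}-\frac{\delta}{2}$, and deriving the contradiction $X>X+\frac{\delta}{2}$ on $B\cap C$; you instead pass to the limit directly along a countable sequence $\varepsilon_{n}\downarrow 0$, intersecting the countably many null exceptional sets so that $X\geq f_{q-\varepsilon_{n}}\uparrow f_{q}$ holds pointwise on a single full-measure set. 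Your direct argument is cleaner and makes explicit the subtlety (the $\varepsilon$-dependence of the null sets) that the paper's positive-measure intersection device handles implicitly; the monotonicity of $\{f_{q}\}$ in $q$ guarantees that the limit along the sequence agrees with the limit as $\varepsilon\downarrow 0$, so nothing is lost by the countable reduction. Both routes are valid and of comparable length.
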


\begin{proof}
.

\begin{enumerate}
\item By assumption we have that $f_{q-\varepsilon }(x)\uparrow f_{q}(x)$
for $\varepsilon \rightarrow 0$, $\mu $-a.s.. We have proved in Proposition (%
\ref{PROP}) that $\left\{ \mathcal{A}_{q}\right\} $ is decreasing monotone,
hence we know that $\mathcal{A}_{q}\subseteq \bigcap\limits_{\varepsilon >0}%
\mathcal{A}_{q-\varepsilon }$. We need to prove that $\bigcap\limits_{%
\varepsilon >0}\mathcal{A}_{q-\varepsilon }\subseteq \mathcal{A}_{q}$. By
contradiction we suppose that there exist $X\in L^{\infty }$ such that $%
X\geq f_{q-\varepsilon }$ for every $\varepsilon >0$ but $X(x)<f_{q}(x)$ on
a set of positive measure $\mu $. Then there exist a $\delta >0$ such that $%
f_{q}(x)>X(x)+\delta $ on a measurable set $B$ with $b:=\mu (B)\in (0,1].$
Since $f_{q-\varepsilon }\uparrow f_{q}$ we may find $\overline{\varepsilon }%
>0$ such that $f_{q-\overline{\varepsilon }}(x)>f_{q}(x)-\frac{\delta }{2}$
on a measurable set $C$ with $\mu (C)>1-b$. Thus $\mu (B\cap C)>0$ and $%
X(x)\geq f_{q-\overline{\varepsilon }}(x)>f_{q}(x)-\frac{\delta }{2}>X(x)+%
\frac{\delta }{2}$ on $B\cap C$.

\item Now let 
\begin{equation*}
B_{q}:=\left\{ X\in L^{\infty }\mid \phi _{\mathbb{F}}(X)\geq q\right\} .
\end{equation*}%
$\mathcal{A}_{q}\subseteq B_{q}$ follows directly from the definition of $%
\phi _{\mathbb{F}}.$ We have to show that $B_{q}\subseteq \mathcal{A}_{q}.$
Let $X\in B_{q}$. Hence $\phi _{\mathbb{F}}(X)\geq q$ and, for all $%
\varepsilon >0,$ there exists $\overline{q}$ such that $\overline{q}>\phi _{%
\mathbb{F}}(X)-\varepsilon \geq q-\varepsilon $ and $X\geq f_{\overline{q}}.$
Since $\left\{ f_{q}\right\} _{q}$ is increasing in $q$ we have that $X\geq
f_{q-\varepsilon }$ for all $\varepsilon >0$, therefore $X\in \mathcal{A}%
_{q-\varepsilon }$. By item 1 and the left continuity in $q$ of the family $%
\mathbb{F}$ we know that $\left\{ \mathcal{A}_{q}\right\} $ is
left-continuous in $q$ and so: $X\in \bigcap\limits_{\epsilon >0}\mathcal{A}%
_{q-\varepsilon }=\mathcal{A}_{q}$.

\item By Lemma (\ref{Aqchiuso}) we know that $\mathcal{A}_{q}$ is $\sigma
(L^{\infty },L^{1})-$closed for any $q$ and therefore the upper level sets $%
B_{q}=\mathcal{A}_{q}$ are $\sigma (L^{\infty },L^{1})-$closed and $\phi _{%
\mathbb{F}}$ is $\sigma (L^{\infty },L^{1})$ upper semicontinuous.
\end{enumerate}
\end{proof}

The next lemma will be applied in the proof of theorem \ref{th}. It can be
proved in a way similar to the convex case (see for example \cite{FS05})

\begin{lemma}
\label{CFA}Let $\phi _{\mathbb{F}}:L^{\infty }\rightarrow \mathbb{R}$ be a
SRM. Then the following are equivalent:

\noindent $\phi _{\mathbb{F}}$ is $\sigma (L^{\infty },L^{1})$-upper
semicontinuous;

\noindent $\phi _{\mathbb{F}}$ is continuous from above: $X_{n},X\in
L^{\infty }$ and $X_{n}\downarrow X$ imply $\phi _{\mathbb{F}%
}(X_{n})\downarrow \phi _{\mathbb{F}}(X)$
\end{lemma}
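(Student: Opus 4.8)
The plan is to prove the two implications separately, using only the monotonicity and quasi-concavity of $\phi_{\mathbb{F}}$ established in Proposition \ref{PROP} together with the dual pairing $(L^{\infty},L^{1})$.

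For the implication \emph{upper-semicontinuous} $\Rightarrow$ \emph{continuous from above}, I would start from a sequence $X_{n}\downarrow X$ in $L^{\infty}$. Since every $X_{n}$ is dominated by $X_{1}\in L^{\infty}$, dominated convergence gives $E[ZX_{n}]\rightarrow E[ZX]$ for each $Z\in L^{1}$, so $X_{n}\rightarrow X$ in $\sigma(L^{\infty},L^{1})$. Monotonicity of $\phi_{\mathbb{F}}$ forces $\phi_{\mathbb{F}}(X_{n})$ to be decreasing with $\phi_{\mathbb{F}}(X_{n})\geq\phi_{\mathbb{F}}(X)$, so the limit $c:=\lim_{n}\phi_{\mathbb{F}}(X_{n})$ exists and $c\geq\phi_{\mathbb{F}}(X)$. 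If $c>\phi_{\mathbb{F}}(X)$, I would pick $q\in(\phi_{\mathbb{F}}(X),c)$; then $X_{n}$ eventually lies in the upper level set $\{\phi_{\mathbb{F}}\geq q\}$, which is $\sigma(L^{\infty},L^{1})$-closed by hypothesis, and since $X_{n}\rightarrow X$ this forces $\phi_{\mathbb{F}}(X)\geq q$, a contradiction. Hence $c=\phi_{\mathbb{F}}(X)$ and $\phi_{\mathbb{F}}(X_{n})\downarrow\phi_{\mathbb{F}}(X)$.

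The converse, \emph{continuous from above} $\Rightarrow$ \emph{upper-semicontinuous}, is the substantive direction and follows the template of the convex case in \cite{FS05}. I must show that each upper level set $\mathcal{C}_{q}:=\{X\in L^{\infty}\mid\phi_{\mathbb{F}}(X)\geq q\}$ is $\sigma(L^{\infty},L^{1})$-closed; note that it is convex by the quasi-concavity proved in Proposition \ref{PROP}. By the Krein--\v{S}mulian theorem it then suffices to check that $\mathcal{C}_{q}\cap\{\|X\|_{\infty}\leq r\}$ is $\sigma(L^{\infty},L^{1})$-closed for every $r>0$. Fixing $r$, and using that $L^{1}$ is separable so that the ball $\{\|X\|_{\infty}\leq r\}$ is weak-$*$ compact and metrizable, I may work with a sequence $X_{n}\in\mathcal{C}_{q}$ with $\|X_{n}\|_{\infty}\leq r$ and $X_{n}\rightarrow X$ in $\sigma(L^{\infty},L^{1})$. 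The ball being weak-$*$ closed gives $\|X\|_{\infty}\leq r$, and it remains to prove $\phi_{\mathbb{F}}(X)\geq q$.

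This last step is the crux, and the one I expect to be the main obstacle: weak-$*$ convergence yields neither $\mu$-a.s. convergence nor monotonicity, so continuity from above cannot be invoked directly. I would remedy this with a Mazur-type argument. Since $X_{n}\rightarrow X$ weakly, suitable convex combinations $Y_{k}\in\mathrm{conv}\{X_{n}\mid n\geq k\}$ converge to $X$ in $L^{1}$-norm; by convexity of $\mathcal{C}_{q}$ (again Proposition \ref{PROP}) together with $\|Y_{k}\|_{\infty}\leq r$, each $Y_{k}$ still lies in $\mathcal{C}_{q}$. Passing to a subsequence I may assume $Y_{k}\rightarrow X$ $\mu$-a.s. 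Then $Z_{k}:=\sup_{j\geq k}Y_{j}$ is a decreasing sequence in $L^{\infty}$, bounded by $r$, with $Z_{k}\downarrow X$ $\mu$-a.s. Monotonicity gives $\phi_{\mathbb{F}}(Z_{k})\geq\phi_{\mathbb{F}}(Y_{k})\geq q$, while continuity from above gives $\phi_{\mathbb{F}}(Z_{k})\downarrow\phi_{\mathbb{F}}(X)$; hence $\phi_{\mathbb{F}}(X)\geq q$. This shows $X\in\mathcal{C}_{q}$, so $\mathcal{C}_{q}\cap\{\|X\|_{\infty}\leq r\}$ is closed, and Krein--\v{S}mulian concludes that $\phi_{\mathbb{F}}$ is $\sigma(L^{\infty},L^{1})$-upper-semicontinuous.
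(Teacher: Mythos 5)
Your proof is correct, and it shares the paper's overall architecture, but the hard direction is bridged by a different device, so a comparison is worth recording. The first implication is in substance the paper's argument: monotonicity gives $\phi_{\mathbb{F}}(X_{n})\downarrow c\geq \phi_{\mathbb{F}}(X)$, weak-$*$ convergence of $X_{n}$ to $X$ follows from dominated convergence (the paper phrases this as order continuity of the elements of $L^{1}$), and closedness of the upper level set finishes; your concluding contradiction is cosmetic. For the converse, both you and the paper use convexity of the level sets (from quasi-concavity), the Krein--Smulian reduction to balls, and the identical final step ($Z_{k}:=\sup_{j\geq k}Y_{j}\downarrow X$ a.s., $\phi_{\mathbb{F}}(Z_{k})\geq \phi_{\mathbb{F}}(Y_{k})\geq q$, then continuity from above). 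The difference is how one passes from a weak-$*$ limit point of $\mathcal{C}_{q}\cap \{\Vert \cdot \Vert _{\infty }\leq r\}$ to an a.s.-convergent sequence inside $\mathcal{C}_{q}$. The paper embeds $(L^{\infty },\sigma (L^{\infty },L^{1}))$ continuously into $(L^{1},\sigma (L^{1},L^{\infty }))$ and uses that a convex set is $\sigma (L^{1},L^{\infty })$-closed iff it is $L^{1}$-norm closed, so it can test closedness directly on norm-convergent sequences. You instead invoke weak-$*$ metrizability of the ball --- legitimate here because $L^{1}(\mathbb{R},\mathcal{B}(\mathbb{R}),\mu )$ is separable, $\mathcal{B}(\mathbb{R})$ being countably generated --- so as to reduce to weak-$*$ convergent sequences, and then apply Mazur's lemma (using convexity of $\mathcal{C}_{q}$ a second time) to manufacture norm-convergent convex combinations that remain in $\mathcal{C}_{q}$. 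Both devices are incarnations of the same Hahn--Banach fact that norm and weak closures of convex sets coincide; the paper's deployment is marginally shorter and works over an arbitrary probability space, whereas yours needs separability of the predual --- an extra hypothesis in general, though automatic in this paper's setting. One step you should make explicit if you write this up: Mazur's lemma is applied in $L^{1}$, so you need that $\sigma (L^{\infty },L^{1})$-convergence of $(X_{n})$ implies $\sigma (L^{1},L^{\infty })$-convergence (test against $Z\in L^{\infty }\subseteq L^{1}$), which is exactly the embedding continuity the paper records.
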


\begin{proof}
Let $\phi _{\mathbb{F}}$ be $\sigma (L^{\infty },L^{1})$-upper
semicontinuous and suppose that $X_{n}\downarrow X$. The monotonicity of $%
\phi _{\mathbb{F}}$ implies $\phi _{\mathbb{F}}(X_{n})\geq \phi _{\mathbb{F}%
}(X)$ and $\phi _{\mathbb{F}}(X_{n})\downarrow $ and therefore $%
q:=\lim_{n}\phi _{\mathbb{F}}(X_{n})\geq \phi _{\mathbb{F}}(X)$. Hence $\phi
_{\mathbb{F}}(X_{n})\geq q$ and $X_{n}\in B_{q}:=\{Y\in L^{\infty }\mid \phi
_{\mathbb{F}}(Y)\geq q\}$ which is $\sigma (L^{\infty },L^{1})$-closed by
assumption. As the elements in $L^{1}$ are order continuous, from $%
X_{n}\downarrow X$ we get $X_{n}\overset{\sigma (L^{\infty },L^{1})}{%
\longrightarrow }X$ and therefore $X\in B_{q}.$ This implies that $\phi _{%
\mathbb{F}}(X)=q$ and that $\phi _{\mathbb{F}}$ is continuous from above.

Conversely, suppose that $\phi _{\mathbb{F}}$ is continuous from above. We
have to show that the convex set $B_{q}$ is $\sigma (L^{\infty },L^{1})$%
-closed for any $q$. By the Krein Smulian Theorem it is sufficient to prove
that $C:=B_{q}\cap \left\{ X\in L^{\infty }\mid \quad \parallel X\parallel
_{\infty }<r\right\} $ is $\sigma (L^{\infty },L^{1})$-closed for any fixed $%
r>0$. As $C\subseteq L^{\infty }\subseteq L^{1}$ and as the embedding 
\begin{equation*}
(L^{\infty },\sigma (L^{\infty },L^{1}))\hookrightarrow (L^{1},\sigma
(L^{1},L^{\infty }))
\end{equation*}%
is continuous it is sufficient to show that $C$ is $\sigma (L^{1},L^{\infty
})$-closed. Since the $\sigma (L^{1},L^{\infty })$ topology and the $L^{1}$
norm topology are compatible, and $C$ is convex, it is sufficient to prove
that $C$ is closed in $L^{1}$. Take $X_{n}\in C$ such that $X_{n}\rightarrow
X$ in $L^{1}$. Then there exists a subsequence $\left\{ Y_{n}\right\}
_{n}\subseteq \left\{ X_{n}\right\} _{n}$ such that $Y_{n}\rightarrow X$\
a.s. and $\phi _{\mathbb{F}}(Y_{n})\geq q$ for all $n$. Set $%
Z_{m}:=\sup_{n\geq m}Y_{n}\vee X$. Then $Z_{m}\in L^{\infty }$, since $%
\left\{ Y_{n}\right\} _{n}$ is uniformly bounded$,$ and $Z_{m}\geq Y_{m}$, $%
\phi _{\mathbb{F}}(Z_{m})\geq \phi _{\mathbb{F}}(Y_{m})$ and $%
Z_{m}\downarrow X$. From the continuity from above we conclude: $\phi _{%
\mathbb{F}}(X)=\lim_{m}\phi _{\mathbb{F}}(Z_{m})\geq \lim \sup_{m}\phi _{%
\mathbb{F}}(Y_{m})\geq q$. Thus $X\in B_{q}$ and consequently $X\in C.$
\end{proof}

\bigskip

When the family of performance curves $\mathbb{F}$ is left continuous,
Proposition (\ref{contin}) shows that the SRM is $\sigma (L^{\infty },L^{1})$%
-upper semicontinuous. Hence we can provide a dual representation for the
SRM in the same spirit of \cite{Volle}, \cite{CMMMa} and \cite{DK10}. In the
following theorem we first provide the representation of $\phi _{\mathbb{F}}$
in terms of the dual function $H$ defined in (\ref{HH}) and then we show
that $\phi _{\mathbb{F}}$ can also be represented in terms of the right
continuous version of $H$, which can be written in a different way as in (%
\ref{H+}). This dual representation will provide an interesting
interpretation of the SRM (see section 3.2).

Denote 
\begin{equation*}
\mathcal{P}:=\left\{ Q\ll P\right\} \text{ and }\mathcal{Z}:=\left\{ Z=\frac{%
dQ}{dP}\mid Q\in \mathcal{P}\right\} =\left\{ Z\in L_{+}^{1}\mid
E[Z]=1\right\} .
\end{equation*}

\begin{theorem}
\label{th}Suppose that the family of performance curves $\mathbb{F}$ is left
continuous. Each SRM $\phi _{\mathbb{F}}:L^{\infty }(\mathbb{R},\mathcal{B}(%
\mathbb{R}),\mu )\rightarrow \overline{\mathbb{R}}$\ defined in (\ref{phi})
can be represented as%
\begin{eqnarray}
\phi _{\mathbb{F}}(X) &=&\inf_{Z\in \mathcal{Z}}H(Z,E[ZX])=\inf_{Z\in 
\mathcal{Z}}H^{+}(Z,E[ZX])  \label{ff} \\
&=&\inf_{Q\in \mathcal{P}}H^{+}(Q,E_{Q}[X])\quad \text{\ for all }X\in
L^{\infty }  \notag
\end{eqnarray}%
where $H:L^{1}\times \mathbb{R}\rightarrow \overline{\mathbb{R}}$ is defined
by 
\begin{equation}
H(Z,t):=\sup_{\xi \in L^{\infty }}\left\{ \phi _{\mathbb{F}}(\xi )\mid
E[Z\xi ]\leq t\right\} ,  \label{HH}
\end{equation}%
$H^{+}(Z,\cdot )$ is its right continuous version: 
\begin{eqnarray}
H^{+}(Z,t) &:&=\inf_{s>t}H(Z,s)  \notag \\
&=&\sup \left\{ q\in \mathbb{R}\mid t\geq \gamma (Z,q)\right\} ,  \label{H+}
\end{eqnarray}%
and $\gamma :L^{1}\times \mathbb{R}\rightarrow \overline{\mathbb{R}}$ is
defined by: 
\begin{equation}
\gamma (Z,q):=\inf_{X\in L^{\infty }}\left\{ E[ZX]\mid \phi _{\mathbb{F}%
}(X)\geq q\right\} .  \label{Q}
\end{equation}
\end{theorem}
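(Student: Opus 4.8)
The plan is to read Theorem \ref{th} as an instance of the dual representation of monotone, quasi-concave, $\sigma(L^{\infty},L^{1})$-upper semicontinuous maps, and to obtain it from a single Hahn--Banach separation on the upper level sets; this separation is the only non-routine ingredient. All the structural facts I need are already available: by Proposition \ref{PROP}, $\phi_{\mathbb{F}}$ is monotone and quasi-concave and each $\mathcal{A}_{q}$ is convex and upward monotone; since $\mathbb{F}$ is left continuous, Proposition \ref{contin} identifies the upper level set $B_{q}:=\{X\mid\phi_{\mathbb{F}}(X)\ge q\}$ with $\mathcal{A}_{q}$, and Lemma \ref{Aqchiuso} shows each $\mathcal{A}_{q}$ is $\sigma(L^{\infty},L^{1})$-closed. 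I also identify $Z\in\mathcal{Z}$ with $Q\in\mathcal{P}$ through $Z=\frac{dQ}{dP}$, so that $E[ZX]=E_{Q}[X]$ and $\inf_{Z\in\mathcal{Z}}$ coincides with $\inf_{Q\in\mathcal{P}}$; the third equality in (\ref{ff}) is exactly this change of variable and requires no separate argument. The easy inequality comes for free: taking $\xi=X$ in the supremum (\ref{HH}) gives $\phi_{\mathbb{F}}(X)\le H(Z,E[ZX])$ for every $Z$, and since $H(Z,\cdot)$ is nondecreasing, $H(Z,E[ZX])\le\inf_{s>E[ZX]}H(Z,s)=H^{+}(Z,E[ZX])$; taking infima yields $\phi_{\mathbb{F}}(X)\le\inf_{Z}H(Z,E[ZX])\le\inf_{Z}H^{+}(Z,E[ZX])$.

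The substance is the reverse inequality, and here is where I would spend the effort. Fix $X_{0}$, set $q_{0}:=\phi_{\mathbb{F}}(X_{0})$, and take an arbitrary $q>q_{0}$ with $B_{q}\neq\varnothing$ (if every such level has $B_{q}=\varnothing$ there is nothing to separate, and the bound below holds vacuously from (\ref{H+})). Then $X_{0}\notin B_{q}$, and $B_{q}$ is a nonempty closed convex subset of the locally convex space $(L^{\infty},\sigma(L^{\infty},L^{1}))$, so Hahn--Banach furnishes a continuous linear functional — necessarily $\xi\mapsto E[Z\xi]$ with $Z\in L^{1}$, $Z\neq 0$ — strictly separating $X_{0}$ from $B_{q}$:
\[
E[ZX_{0}]<\inf_{\xi\in B_{q}}E[Z\xi]=\gamma(Z,q).
\]
The upward monotonicity of $B_{q}$ forces $Z\ge 0$: if $P(\{Z<0\})>0$, then adding $n\,1_{\{Z<0\}}$ to any fixed element of $B_{q}$ keeps it in $B_{q}$ while sending $E[Z\,\cdot\,]$ to $-\infty$, contradicting $\gamma(Z,q)>E[ZX_{0}]>-\infty$. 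Being nonzero and nonnegative, $Z$ has $E[Z]>0$, and after normalizing $\hat{Z}:=Z/E[Z]\in\mathcal{Z}$ the positively homogeneous inequality $E[\hat{Z}X_{0}]<\gamma(\hat{Z},q)$ persists.

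From this one inequality I read off the required bounds with no further separation. Any $\xi$ with $E[\hat{Z}\xi]\le E[\hat{Z}X_{0}]$ satisfies $E[\hat{Z}\xi]<\gamma(\hat{Z},q)=\inf_{\eta\in B_{q}}E[\hat{Z}\eta]$, hence $\xi\notin B_{q}$, i.e. $\phi_{\mathbb{F}}(\xi)<q$; so $H(\hat{Z},E[\hat{Z}X_{0}])\le q$. The same holds for every $s$ in the nonempty interval $(E[\hat{Z}X_{0}],\gamma(\hat{Z},q))$, whence $H^{+}(\hat{Z},E[\hat{Z}X_{0}])=\inf_{s>E[\hat{Z}X_{0}]}H(\hat{Z},s)\le q$. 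Therefore $\inf_{Z}H(Z,E[ZX_{0}])\le q$ and $\inf_{Z}H^{+}(Z,E[ZX_{0}])\le q$; letting $q\downarrow q_{0}$ closes the chain $\phi_{\mathbb{F}}(X_{0})\le\inf_{Z}H\le\inf_{Z}H^{+}\le\phi_{\mathbb{F}}(X_{0})$, so all three coincide.

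I expect the separation step to be the main obstacle, together with the two facts it rests on: the closedness of $B_{q}$, which is exactly where the left-continuity hypothesis enters (via Proposition \ref{contin} and Lemma \ref{Aqchiuso}), and the upgrade of the abstract separating functional to a genuine density in $\mathcal{Z}$, which is where monotonicity is used to secure $Z\ge 0$ and its normalizability. What then remains is routine: verifying the equivalence of the two expressions for $H^{+}$ in (\ref{H+}), which is the standard generalized-inverse duality between the nondecreasing maps $H(Z,\cdot)$ and $\gamma(Z,\cdot)$, and disposing of the degenerate levels $B_{q}=\varnothing$ or $q_{0}=\sup\mathcal{I}$, for which the claimed identities hold trivially.
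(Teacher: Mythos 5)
Your proposal is correct, and its engine is the same as the paper's: a Hahn--Banach separation of $X_{0}$ from an upper level set of $\phi _{\mathbb{F}}$, which is nonempty, convex and $\sigma (L^{\infty },L^{1})$-closed by Propositions \ref{PROP}, \ref{contin} and Lemma \ref{Aqchiuso} (the paper's $C_{\varepsilon }$ is exactly your $B_{q}$ with $q=\phi _{\mathbb{F}}(X)+\varepsilon $), followed by the same monotonicity argument forcing $Z\geq 0$ and the normalization into $\mathcal{Z}$. The genuine difference is how the $H^{+}$ representation is reached. The paper first proves $\phi _{\mathbb{F}}(X)=\inf_{Z}H(Z,E[ZX])$ and then, in a separate Step 2, passes to $H^{+}$ by approximating $X_{m}\downarrow X$ and invoking continuity from above, i.e. Lemma \ref{CFA}, whose proof requires a Krein--Smulian argument; you instead exploit the quantitative content of strong separation, namely the strict gap $E[\hat{Z}X_{0}]<\gamma (\hat{Z},q)$, to get $H(\hat{Z},s)\leq q$ for every $s$ in the nonempty interval $(E[\hat{Z}X_{0}],\gamma (\hat{Z},q))$, hence $H^{+}(\hat{Z},E[\hat{Z}X_{0}])\leq q$ from the very same separation. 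This renders Lemma \ref{CFA} unnecessary for Theorem \ref{th}, and your treatment of the degenerate cases ($B_{q}=\varnothing $, $\phi _{\mathbb{F}}(X_{0})=-\infty $) is also more uniform than the paper's parenthetical $C_{M}$ device. Note that your shortcut does require the strong (gap) form of the separation theorem, $E[ZX_{0}]<\inf_{\xi \in B_{q}}E[Z\xi ]$, which Hahn--Banach in a locally convex space indeed provides, but which is strictly more than the pointwise inequality (\ref{CC}) recorded in the paper: with only (\ref{CC}) your interval of values $s$ could be empty. The one item you leave unproved is the second expression for $H^{+}$ in (\ref{H+}), i.e. $\inf_{s>t}H(Z,s)=\sup \left\{ q\in \mathbb{R}\mid \gamma (Z,q)\leq t\right\} $, which is part of the statement and occupies the paper's Step 3; you are right that it is a routine generalized-inverse argument (both inequalities follow from the definitions of $H$ and $\gamma $ together with right continuity of the generalized inverse, exactly as the paper does), but a complete write-up should spell it out.
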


\begin{proof}
Step 1: $\phi _{\mathbb{F}}(X)=$ $\inf_{Z\in \mathcal{Z}}H(Z,E[ZX]).$

Fix $X\in L^{\infty }$. As $X\in \left\{ \xi \in L^{\infty }\mid E[Z\xi
]\leq E[ZX]\right\} $, by the definition of $H(Z,E[ZX])$ we deduce that, for
all $Z\in L^{1},$ 
\begin{equation*}
H(Z,E[ZX])\geq \phi _{\mathbb{F}}(X)
\end{equation*}%
hence 
\begin{equation}
\inf_{Z\in L^{1}}H(Z,E[ZX])\geq \phi _{\mathbb{F}}(X).  \label{infH}
\end{equation}%
We prove the opposite inequality. Let $\varepsilon >0$ and define the set%
\begin{equation*}
C_{\varepsilon }:=\left\{ \xi \in L^{\infty }\mid \phi _{\mathbb{F}}(\xi
)\geq \phi _{\mathbb{F}}(X)+\varepsilon \right\} 
\end{equation*}%
As $\phi _{\mathbb{F}}$ is quasi-concave and $\sigma (L^{\infty },L^{1})$%
-upper semicontinuous (Propositions \ref{PROP} and \ref{contin}), $C$ is
convex and $\sigma (L^{\infty },L^{1})-$closed. Since $X\notin
C_{\varepsilon }$, (if $\phi _{\mathbb{F}}(X)=-\infty ,$ we may take $%
C_{M}:=\left\{ \xi \in L^{\infty }\mid \phi _{\mathbb{F}}(\xi )\geq
-M\right\} $ and the following argument would hold as well) the Hahn Banach
theorem implies the existence of a continuous linear functional that
strongly separates $X$ and $C_{\varepsilon },$ that is there exist $%
Z_{\varepsilon }\in L^{1}$ such that%
\begin{equation}
E[Z_{\varepsilon }\xi ]>E[Z_{\varepsilon }X]\text{ for all }\xi \in
C_{\varepsilon }.  \label{CC}
\end{equation}%
Hence%
\begin{equation*}
\left\{ \xi \in L^{\infty }\mid E[Z_{\varepsilon }\xi ]\leq E[Z_{\varepsilon
}X]\right\} \subseteq C_{\varepsilon }^{c}:=\left\{ \xi \in L^{\infty }\mid
\phi _{\mathbb{F}}(\xi )<\phi _{\mathbb{F}}(X)+\varepsilon \right\} 
\end{equation*}%
and from (\ref{infH})%
\begin{align*}
\phi _{\mathbb{F}}(X)& \leq \inf_{Z\in L^{1}}H(Z,E[ZX])\leq H(Z_{\varepsilon
},E[Z_{\varepsilon }X]) \\
& =\sup \left\{ \phi _{\mathbb{F}}(\xi )\mid \xi \in L^{\infty }\text{ and }%
E[Z_{\varepsilon }\xi ]\leq E[Z_{\varepsilon }X]\right\}  \\
& \leq \sup \left\{ \phi _{\mathbb{F}}(\xi )\mid \xi \in L^{\infty }\text{
and }\phi _{\mathbb{F}}(\xi )<\phi _{\mathbb{F}}(X)+\varepsilon \right\}
\leq \phi _{\mathbb{F}}(X)+\varepsilon .
\end{align*}%
Therefore, $\phi _{\mathbb{F}}(X)=\inf_{Z\in L^{1}}H(Z,E[ZX])$. To show that
the $inf$ can be taken over the positive cone $L_{+}^{1}$, it is sufficient
to prove that $Z_{\varepsilon }\subseteq L_{+}^{1}$. Let $Y\in L_{+}^{\infty
}$ and $\xi \in C_{\varepsilon }.$ Given that $\phi _{\mathbb{F}}$ is
monotone increasing, $\xi +nY\in C_{\varepsilon }$ for every $n\in \mathbb{N}
$ and, from (\ref{CC}), we have:%
\begin{equation*}
E[Z_{\varepsilon }(\xi +nY)]>E[Z_{\varepsilon }X]\Rightarrow
E[Z_{\varepsilon }Y]>\frac{E[Z_{\varepsilon }(X-\xi )]}{n}\rightarrow 0,%
\text{ as }n\rightarrow \infty .
\end{equation*}%
As this holds for any $Y\in L_{+}^{\infty }$ we deduce that $Z_{\varepsilon
}\subseteq L_{+}^{1}$. Therefore, $\phi _{\mathbb{F}}(X)=\inf_{Z\in
L_{+}^{1}}H(Z,E[ZX])$.

By definition of $H(Z,t)$,%
\begin{equation*}
H(Z,E[ZX])=H(\lambda Z,E[(\lambda Z)X])\quad \forall Z\in L_{+}^{1}\text{ , }%
Z\neq 0,\text{ }\lambda \in (0,\infty ).
\end{equation*}%
Hence we deduce 
\begin{equation*}
\phi _{\mathbb{F}}(X)=\inf_{Z\in L_{+}^{1}(\mathbb{R})}H(Z,E[ZX])=\inf_{Z\in 
\mathcal{Z}}H(Z,E[ZX])=\inf_{Q\in \mathcal{P}}H(Q,E_{Q}[X]).
\end{equation*}%
Step 2: $\phi _{\mathbb{F}}(X)=\inf_{Z\in \mathcal{Z}}H^{+}(Z,E[ZX]).$

Since $H(Z,\cdot )$ is increasing and $Z\in L_{+}^{1}$ we obtain 
\begin{equation*}
H^{+}(Z,E[ZX]):=\inf_{s>E[ZX]}H(Z,s)\leq \lim_{X_{m}\downarrow
X}H(Z,E[ZX_{m}]),
\end{equation*}%
\begin{align*}
\phi _{\mathbb{F}}(X)& =\inf_{Z\in L_{+}^{1}}H(Z,E[ZX])\leq \inf_{Z\in
L_{+}^{1}}H^{+}(Z,E[ZX])\leq \inf_{Z\in L_{+}^{1}}\lim_{X_{m}\downarrow
X}H(Z,E[ZX_{m}]) \\
& =\lim_{X_{m}\downarrow X}\inf_{Z\in
L_{+}^{1}}H(Z,E[ZX_{m}])=\lim_{X_{m}\downarrow X}\phi _{\mathbb{F}%
}(X_{m})=\phi _{\mathbb{F}}(X),
\end{align*}%
where in the last equality we applied Lemma \ref{CFA} that guarantees the
continuity from above of $\phi _{\mathbb{F}}$. 

Step 3: $H^{+}(Z,t):=\inf_{s>t}H(Z,s)=\sup \left\{ q\in \mathbb{R}\mid
\gamma (Z,q)\leq t\right\} $ where $\gamma $ is defined in (\ref{Q}).

Denote 
\begin{equation*}
S(Z,t):=\sup \left\{ q\in \mathbb{R}\mid \gamma (Z,q)\leq t\right\} ,\text{ }%
(Z,t)\in L^{1}\times \mathbb{R},
\end{equation*}%
and note that $S(Z,\cdot )$ is the right inverse of the increasing function $%
\gamma (Z,\cdot )$ and therefore $S(Z,\cdot )$ is right continuous.\newline
To prove that $H^{+}(Z,t)\leq S(Z,t)$ it is sufficient to show that for all $%
p>t$ we have:

\begin{equation}
H(Z,p)\leq S(Z,p),  \label{50}
\end{equation}%
Indeed, if (\ref{50}) is true%
\begin{equation*}
H^{+}(Z,t)=\inf_{p>t}H(Z,p)\leq \inf_{p>t}S(Z,p)=S(Z,t),
\end{equation*}%
as both $H^{+}$ and $S$ are right continuous in the second argument.\newline
Writing explicitly the inequality (\ref{50}) 
\begin{equation*}
\sup_{\xi \in L^{\infty }}\left\{ \phi _{\mathbb{F}}(\xi )\mid E[Z\xi ]\leq
p\right\} \leq \sup \left\{ q\in \mathbb{R}\mid \gamma (Z,q)\leq p\right\}
\end{equation*}%
and letting $\xi \in L^{\infty }$ satisfying $E[Z\xi ]\leq p$, we see that
it is sufficient to show the existence of $q\in \mathbb{R}$ such that $%
\gamma (Z,q)\leq p$ and $q\geq \phi _{\mathbb{F}}(\xi )$. If $\phi _{\mathbb{%
F}}(\xi )=\infty $ then $\gamma (Z,q)\leq p$ for any $q$ and therefore $%
S(Z,p)=H(Z,p)=\infty $.

Suppose now that $\infty >\phi _{\mathbb{F}}(\xi )>-\infty $ and define $%
q:=\phi _{\mathbb{F}}(\xi ).$ As $E[\xi Z]\leq p$ we have:%
\begin{equation*}
\gamma (Z,q):=\inf \left\{ E[Z\xi ]\mid \phi _{\mathbb{F}}(\xi )\geq
q\right\} \leq p.
\end{equation*}%
Then $q\in \mathbb{R}$ satisfies the required conditions.

To obtain $H^{+}(Z,t):=\inf_{p>t}H(Z,p)\geq S(Z,t)$ it is sufficient to
prove that, for all $p>t,$ $H(Z,p)\geq S(Z,t)$, that is :%
\begin{equation}
\sup_{\xi \in L^{\infty }}\left\{ \phi _{\mathbb{F}}(\xi )\mid E[Z\xi ]\leq
p\right\} \geq \sup \left\{ q\in \mathbb{R}\mid \gamma (Z,q)\leq t\right\} .
\label{52}
\end{equation}

Fix any $p>t$\ and consider any $q\in \mathbb{R}$ such that $\gamma
(Z,q)\leq t$. By the definition of $\gamma $, for all $\varepsilon >0$ there
exists $\xi _{\varepsilon }\in L^{\infty }$ such that $\phi _{\mathbb{F}%
}(\xi _{\varepsilon })\geq q$ and $E[Z\xi _{\varepsilon }]\leq t+\varepsilon
.$ Take $\varepsilon $ such that $0<\varepsilon <p-t$. Then $E[Z\xi
_{\varepsilon }]\leq p$ and $\phi _{\mathbb{F}}(\xi _{\varepsilon })\geq q$
and (\ref{52}) follows.
\end{proof}

\begin{remark}[Interpretation of formulas \protect\ref{H+} and \protect\ref%
{Q}]
\label{rem}Let $Q$ be the 'weight' that we can assign to the author's
publications (for example, the impact factor of the Journal where the
article is published). For a fixed $Q,$ the term $\gamma (Q,q):=\inf \left\{
E_{Q}[\xi ]\mid \phi _{\mathbb{F}}(\xi )\geq q\right\} $ represents the
smallest $Q$-average of citations that a generic author needs in order to
have the SRM at least of $q$. We observe that this term is independent from
the citations of the author $X.$

On the light of these considerations we can interpret the term $%
H^{+}(Q,E_{Q}[X]):=\sup \left\{ q\in \mathbb{R}\mid E_{Q}[X]\geq \gamma
(Q,q)\right\} $ as the greatest performance level that the author $X$ can
reach, in the case that we attribute the weight $Q$ to the publications.
Namely, we compare the $Q$-average of the author $X$, $E_{Q}[X]$, with the
minimum $Q$-average necessary to reach each level $q$, that is $\gamma (Q,q)$%
.\ 
\end{remark}

\subsection{Examples}

In the following examples we find \textit{the dual representation of some
existing indices}. In all these examples the family $\mathbb{F}$ of
performance curves is left continuous hence, by Proposition (\ref{contin}),
the associated SRM $\phi _{\mathbb{F}}$ is $\sigma (L^{\infty },L^{1})$%
-upper semicontinuous and, from (\ref{Aq=upplev}), $X$ satisfies: 
\begin{equation*}
\phi _{\mathbb{F}}(X)\geq q\text{ iff }X\in \mathcal{A}_{q}\text{ iff }X\geq
f_{q}.
\end{equation*}%
Therefore, we find the dual representation computing $\gamma ,$ $H^{+}$ and $%
\phi _{\mathbb{F}}$ applying the formulas: (\ref{Q}),(\ref{H+}) and (\ref{ff}%
). Let $X\in L_{+}^{\infty }$, $Z\in L_{+}^{1}$, $q\in \mathbb{R}_{+}.$ Then:%
\begin{equation*}
\gamma (Z,q):=\inf_{\phi _{\mathbb{F}}(X)\geq q}E[ZX]=\inf_{X\geq
f_{q}}E[ZX]=E[Zf_{q}].
\end{equation*}%
Recall that $X=\sum_{i=1}^{p}x_{i}1_{(i-1,i]}$ , with $x_{i}\geq x_{i+1}>0$
for all $i,$ and that $p$ satisfies $X=X1_{(0,p]}\in L_{+}^{\infty }$.

\begin{example}[max \# of citations]
Consider the example (\ref{cmax}), where $f_{q}=q1_{(0,1]}$. Then:%
\begin{equation*}
\gamma (Z,q)=qE[Z1_{(0,1]}]
\end{equation*}%
and we obtain%
\begin{equation*}
H^{+}(Z,E[ZX]):=\sup \left\{ q\in \mathbb{R}\mid E[ZX]\geq
qE[Z1_{(0,1]}]\right\} =\frac{E[ZX]}{E[Z1_{(0,1]}]}
\end{equation*}%
(we may assume that $E[Z1_{(0,1]}]\neq 0$, otherwise $H^{+}(Z,E[ZX])=+\infty 
$ and it does not contribute to $\phi _{\mathbb{F}_{_{c_{\max }}}}$)$.$ In
our application, any non zero citation vector $X$ always satisfies $X\geq
x_{1}1_{(0,1]}$ and, since $E[X1_{(0,1]}]=x_{1}E[1_{(0,1]}],$ we also have: $%
\frac{X}{E[X1_{(0,1]}]}\geq \frac{1_{(0,1]}}{E[1_{(0,1]}]}$. Therefore, 
\begin{equation*}
E\left[ Z\frac{1_{(0,1]}}{E[1_{(0,1]}]}\right] \leq E\left[ Z\frac{X}{%
E[X1_{(0,1]}]}\right] \text{\quad }\forall Z\in L_{+}^{1}(%
%TCIMACRO{\U{211d} }%
%BeginExpansion
\mathbb{R}
%EndExpansion
)
\end{equation*}%
and 
\begin{equation*}
\frac{E\left[ ZX\right] }{E\left[ Z1_{(0,1]}\right] }\geq \frac{E[1_{(0,1]}X]%
}{E[1_{(0,1]}]}\text{\quad }\forall Z\in L_{+}^{1}(%
%TCIMACRO{\U{211d} }%
%BeginExpansion
\mathbb{R}
%EndExpansion
).
\end{equation*}%
Hence: 
\begin{align*}
\phi _{\mathbb{F}_{_{c_{\max }}}}(X)& =\inf_{Z\in L_{+}^{1}(%
%TCIMACRO{\U{211d} }%
%BeginExpansion
\mathbb{R}
%EndExpansion
)}H^{+}(Z,E[ZX])=\inf_{Z\in L_{+}^{1}(%
%TCIMACRO{\U{211d} }%
%BeginExpansion
\mathbb{R}
%EndExpansion
)}\frac{E[ZX]}{E[Z1_{(0,1]}]} \\
& =\frac{E[1_{(0,1]}X]}{E[1_{(0,1]}1_{(0,1]}]}=x_{1},
\end{align*}%
i.e. the infimum is attained at $Z=1_{(0,1]}\in L_{+}^{1}$, which is of
course natural as this SRM weights only the first publication.
\end{example}

\begin{example}[total \# of publications]
Consider the example (\ref{p}), where $f_{q}=1_{(0,q]}$. Then%
\begin{equation*}
\gamma (Z,q)=E[Z1_{(0,q]}]
\end{equation*}%
and we obtain 
\begin{equation*}
H^{+}(Z,E[ZX]):=\sup \left\{ q\in \mathbb{R}\mid E[ZX]\geq
E[Z1_{(0,q]}]\right\} .
\end{equation*}%
Hence the dual representation of the \emph{total number of publications}
with at least one citation is%
\begin{equation*}
\phi _{\mathbb{F}_{p}}(X)=\inf_{Z\in L_{+}^{1}(%
%TCIMACRO{\U{211d} }%
%BeginExpansion
\mathbb{R}
%EndExpansion
)}\left\{ \sup_{E[ZX]\geq E[Z1_{[0,q]}]}q\right\} 
\end{equation*}%
We show indeed that $\phi _{\mathbb{F}_{p}}(X)=p$, where $p$ is
characterized by $X=X1_{(0,p]}$. For all $Z\in L_{+}^{1},$ and $q\leq p$ we
have 
\begin{equation*}
E[ZX]=E[ZX1_{(0,p]}]\geq E[1_{(0,q]}Z]
\end{equation*}%
and therefore 
\begin{equation*}
\sup_{E[ZX]\geq E[Z1_{(0,q]}]}q\geq p\qquad \forall Z\in L_{+}^{1},
\end{equation*}%
and $\phi _{\mathbb{F}_{p}}(X)\geq p$. Regarding the $\leq $ inequality, it
is enough to take $Z=1_{(p,p+\delta ]}$, with $\delta >0$. In this case, the
condition $E[ZX]\geq E[Z1_{(0,q]}]$ becomes%
\begin{equation*}
0=E[1_{(p,p+\delta ]}X]\geq E[1_{(p,p+\delta ]}1_{(0,q]}]
\end{equation*}%
that holds only for $q\leq p$, hence 
\begin{equation*}
H^{+}(Z,E[ZX])=\sup_{E[1_{(p,p+\delta ]}X]\geq E[1_{(p,p+\delta
]}1_{(0,q]}]}q=p
\end{equation*}%
and $\phi _{\mathbb{F}_{p}}(X)\leq p$.
\end{example}

\begin{example}[h-index]
Consider the example (\ref{exh}), where $f_{q}=q1_{(0,q]}.$ Then%
\begin{equation*}
\gamma (Z,q)=E[Zq1_{(0,q]}]
\end{equation*}%
and we obtain 
\begin{equation*}
H^{+}(Z,E[ZX]):=\sup \left\{ q\in \mathbb{R}\mid E[ZX]\geq
E[Zq1_{(0,q]}]\right\} .
\end{equation*}%
Hence the dual representation of the \emph{h-index} is%
\begin{equation*}
\phi _{\mathbb{F}_{h}}(X)=\inf_{Z\in L_{+}^{1}(%
%TCIMACRO{\U{211d} }%
%BeginExpansion
\mathbb{R}
%EndExpansion
^{+})}\sup_{E[ZX]\geq E[Zq1_{(0,q]}]}q
\end{equation*}%
We indeed show that $\phi _{\mathbb{F}_{h}}(X)=h$, where $h$ is
characterized by $X1_{(0,h]}\geq h1_{(0,h]}$ and $X1_{(h,+\infty )}\leq
h1_{(h,+\infty )}$. First we check that $\phi _{\mathbb{F}_{h}}(X)\geq h$.
For all $Z\in L_{+}^{1},$ and $q\leq h$ we have 
\begin{equation*}
E[ZX]\geq E[ZX1_{(0,h]}]\geq E[Zh1_{(0,h]}]\geq E[Zq1_{(0,q]}],
\end{equation*}%
hence 
\begin{equation*}
\sup_{E[ZX]\geq E[Zq1_{(0,q]}]}q\geq h\qquad \forall Z\in L_{+}^{1}
\end{equation*}%
and $\phi _{\mathbb{F}_{h}}(X)\geq h$.

Regarding the $\leq $ side, take $Z=1_{(h,h+\delta ]}$ with $\delta >0$. For
all $q>h$ there exists $\delta >0$ such that $h+\delta <q$ and then 
\begin{equation*}
E[1_{(h,h+\delta ]}X]\leq E[1_{(h,h+\delta ]}h]<E[1_{(h,h+\delta
]}q1_{(0,q]}]
\end{equation*}%
hence 
\begin{equation*}
\sup_{E[1_{(h,h+\delta ]}X]\geq E[1_{(h,h+\delta ]}q1_{(0,q]}]}q\leq h
\end{equation*}%
and $\phi _{\mathbb{F}_{h}}(X)\leq h$.
\end{example}

\subsection{On the dual approach to SRM}

The dual representation in Theorem \ref{th} and the Remark \ref{rem} suggest
us another approach for the definition of a class of SRMs.

In other words, which is the interpretation of the duality that we are
discovering ?

The primal space is given by the set of all the possible author's citation
records, i.e. by all the random variables $X(w)$ defined on the events $w\in
\Omega $, where \textit{each event now corresponds to the journal in which
the paper appeared.}

The dual space is then represented by all possible linear valuation (the
"Arrow-Debreu price") of the journals.

We may fix a plausible family of probabilities $\mathcal{P}\subseteq \left\{
Q\ll P\right\} $ where each $Q(w)$ then represents the \emph{'value'}
attributed to the journal $w\in \Omega $. The valuation criterion for
journals (i.e. the selection of the family $\mathcal{P})$ has to be
determined a priori and could be based on the \emph{'impact factor'} or
other criterion. A specific $Q$ could attribute more importance to the
journals with a large number of citations (a large impact factor); another
particular $Q$ to the journals having a \textquotedblleft high
quality\textquotedblright . A priori there will be no consensus on the
selection of the family $\mathcal{P}$, hence a robust approach is needed.

\bigskip

As suggested from the dual representation results and in particular from the
equations (\ref{ff}) and (\ref{H+}) we consider, independently to the
particular scientist $X$, a family $\left\{ \gamma _{\beta }\right\} _{\beta
\in 
%TCIMACRO{\U{211d} }%
%BeginExpansion
\mathbb{R}
%EndExpansion
}$ of functions $\gamma _{\beta }:\mathcal{P}\rightarrow \overline{\mathbb{R}%
}$ that associate to each $Q$ the value $\gamma _{\beta }(Q)$, that
represents the smallest $Q$-average of citations in order to reach a quality
index at least of $\beta $.

So given a particular value $Q(w_{i})$ for each $i^{th}$-journal and the
average citations $\gamma _{\beta }(Q)$ necessary to have an index level
greater than $\beta $, we build the SRM in the following way. We define the
function$\mathbb{\ }H^{+}:\mathcal{P}\times \mathbb{R}\rightarrow \overline{%
\mathbb{R}}$ that associates to each pair $(Q,E_{Q}(X))$ the number%
\begin{equation*}
H^{+}(Q,E_{Q}(X)):=\sup \left\{ \beta \in \mathbb{R}\mid E_{Q}(X)\geq \gamma
_{\beta }(Q)\right\} ,
\end{equation*}%
which represents the greatest quality index that the author $X$ can reach
when $Q$ is fixed, and we build the SRM as follows:%
\begin{equation*}
\phi (X):=\inf_{Q\in \mathcal{P}}H^{+}(Q,E_{Q}(X))
\end{equation*}%
which represents a prudential and robust approach with respect to $\mathcal{P%
}$, the plausible different selections of the evaluation of the Journals$.$
This SRM is by construction \emph{quasi-concave} and \emph{monotone
increasing}. Theorem \ref{th} exhibits the relationship between the
performance curve approach and this dual approach.

\section{Example of the calibration of a SRM}

Since the SRM introduced in Section 2 depends on the particular family $%
\mathbb{F}$ of performance curves, in this section we provide an example
showing how to \emph{calibrate} the family $\mathbb{F}$ from the historic
data available for one particular scientific area and seniority. In this
way, the SRM will fit appropriately the characteristics of the research
field and seniority under consideration. We recall that the SRM should be
used only in \emph{relative} terms (to compare the author quality with
respect to the other researchers in the same area) in order to classify the
authors (and structures) into few classes of homogeneous research quality.

\subsection{Determination of the family $\left\{ f_{q}\right\} _{q}$ and of
the SRM}

The first step consists in the selection of a representative sample of $M$
authors in the same scientific area and with the same seniority and then
from this sample of authors we need to extrapolate the family of curves $%
\left\{ f_{q}\right\} _{q}$ that better represents the citation curve of the
area and seniority. The analysis of the citation vectors of each author (see
Fig.\ref{curve}) shows that the theoretical model may be described (for this
particular scientific area) by the formula%
\begin{equation}
f_{q}(x)=\frac{q}{x^{\beta }}  \label{hyp}
\end{equation}%

\begin{figure}[h!]
\centering
\includegraphics[width=0.75\textwidth]{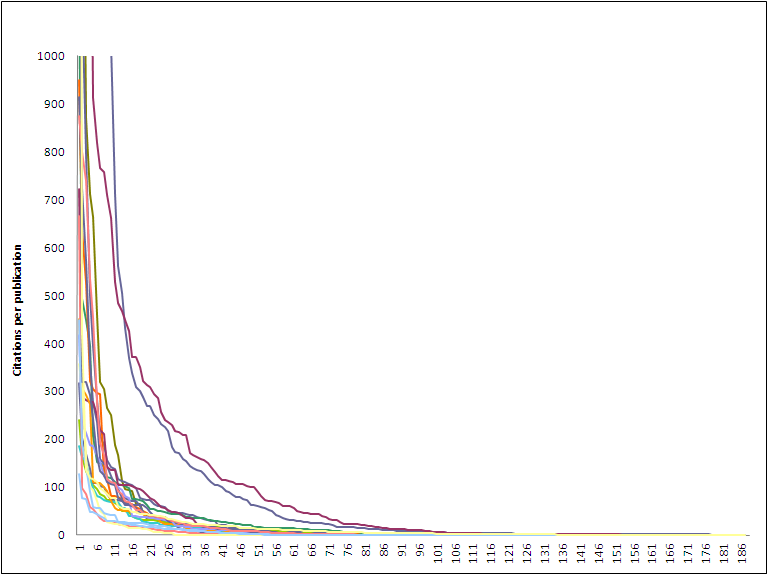}
\caption{Citation curves of 20 senior authors in Math Finance area.}\label{curve}
\end{figure}

%\begin{equation*}
%\FRAME{itbpFU}{4.3958in}{2.3298in}{0.0104in}{\Qcb{Fig. \protect\ref{curve}
%Citation curves of 20 senior authors in Math Finance area.}}{\Qlb{curve}}{%
%authcurv.png}{\special{language "Scientific Word";type "GRAPHIC";display
%"USEDEF";valid_file "F";width 4.3958in;height 2.3298in;depth
%0.0104in;original-width 8.0004in;original-height 6.0416in;cropleft
%"0";croptop "1";cropright "1";cropbottom "0";filename
%'AuthCurv.png';file-properties "XNPEU";}}
%\end{equation*}%
with $q,\beta \in 
%TCIMACRO{\U{211d} }%
%BeginExpansion
\mathbb{R}
%EndExpansion
_{+}$. Setting $\ln f_{q}=Y$, $\ln (q)=\widehat{q},$ $\ln x=X$, $\beta =%
\widehat{\beta }$ we obtain the linearized model%
\begin{equation}
Y=\widehat{q}-\widehat{\beta }X\text{.}  \label{ret}
\end{equation}

For each $i$-th author of the sample we determine $\widehat{\beta }_{i}$
that minimizes the sum of the square distances of the points from the line (%
\ref{ret}). Then, we compute $\bar{\beta}$ as the average of the $\hat{\beta}%
_{i}$:%
\begin{equation*}
\bar{\beta}=\frac{1}{M}\tsum\limits_{i=1}^{M}\hat{\beta}_{i}\text{.}
\end{equation*}%
Once the parameter $\overline{\beta }$ is fixed, we obtain the family of
performance curves $f_{q}(x)=\frac{q}{x^{\overline{\beta }}}$ and then the
associated SRM (hereafter called the $\phi $-index) is: 
\begin{equation}
\phi (X)=\sup \left\{ q\in \mathbb{R}\mid X(x)\geq \frac{q}{x^{\overline{%
\beta }}}\text{\qquad }\forall x\text{ }\right\}   \label{FIINDEX}
\end{equation}

\subsection{The empirical results}

We have chosen a group of\ $20$ well established researchers in the
mathematical finance area. We have computed the $\widehat{\beta }_{i}$ for
each author and we have found that $\overline{\beta }=1,62$.

In the following table (Fig.\ref{CompIndices}.a) we report the results and
the respective ranking obtained calculating the $\phi $-index as in (\ref%
{FIINDEX}) and the $h$-index for each author. Fig.\ref{CompIndices}.b shows
that the hyperbole-type curve (red line) corresponding to the author's $\phi 
$-index is always below his citation curve (blue line), in the domain $(0,p)$%
.%
\begin{figure}[h!]
\centering
\includegraphics[width=0.45\textwidth]{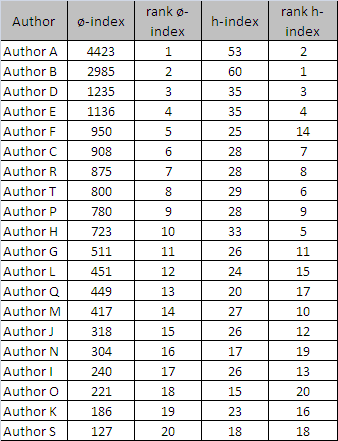}
\includegraphics[width=0.45\textwidth]{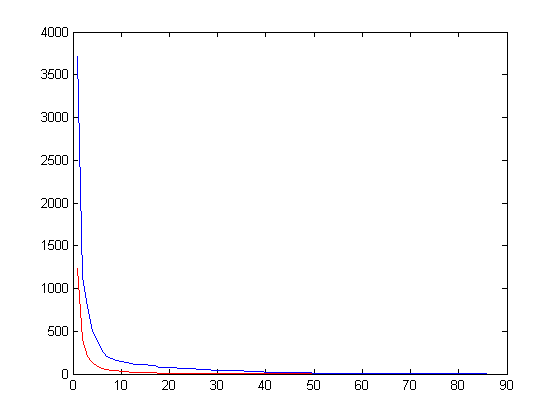}
\caption{(a) left - (b) right}\label{CompIndices}
\end{figure}

%\begin{equation*}
%\begin{array}{cc}
%\FRAME{itbpFU}{2.1612in}{2.8859in}{0in}{\Qcb{Fig. \protect\ref{CompIndices}.a%
%}}{\Qlb{CompIndices}}{confronto.png}{\special{language "Scientific
%Word";type "GRAPHIC";display "USEDEF";valid_file "F";width 2.1612in;height
%2.8859in;depth 0in;original-width 3.5206in;original-height 4.5835in;cropleft
%"0";croptop "1";cropright "1";cropbottom "0";filename
%'confronto.png';file-properties "XNPEU";}} & \FRAME{itbpFU}{2.6775in}{%
%2.8842in}{0in}{\Qcb{Fig. \protect\ref{CompIndices}.b}}{}{graphauth.png}{%
%\special{language "Scientific Word";type "GRAPHIC";display
%"USEDEF";valid_file "F";width 2.6775in;height 2.8842in;depth
%0in;original-width 3.7334in;original-height 2.8003in;cropleft "0";croptop
%"1";cropright "1";cropbottom "0";filename 'graphauth.png';file-properties
%"XNPEU";}}%
%\end{array}%
%\end{equation*}

Notice that the author $F$ increases his index, from the 14th position in
the $h$-index to the 5th one of the $\phi $-index. If we compare this author
with the author $I$, we note that both have almost the same $h$-index but
the $\phi $-index of $F$ is much greater than the $\phi $-index of $I$.
Analyzing their citation curves we observe that they have the same number of
publications, but $F$ has in general a lot more citations for any
publication than $I$, especially those in the Hirsch-core. The same reasons
explain the different ranking of the authors $H$ and $D$.

The conclusion is that the calibrated family of performance curves $\mathbb{F%
}$ takes in the correct account the balance between the number of
publications and the citations per publication, a characteristic indeed of
the specific area under consideration.

\section{Appendix}

We provide a brief summary of the positive and negative features of the peer
review process and of the bibliometric indices. We are well aware that these
remarks are incomplete and represent a \textit{subjective, not
scientifically based, view} of this complex and controversial theme. In the
references, as outlined in the introduction, more detailed arguments can be
found. The following remarks however shows that many possible drawbacks of
bibliometrics indices may be smoothened and reduced by an appropriate use of
them and by the selection of a more convenient class of indices of the type
we presented in the previous sections.

\subsection{Summary of the pros and cons of context evaluation (bibliometric
indices)}

\noindent \textbf{Pros}:

\begin{itemize}
\item \textbf{Easily accessible,} from the online databases (Google Scholar,
ISI Web, MathSciNet, Scopus...);

\item \textbf{Not expensive}: can be used systematically, especially if
tested - every $n$ years - with peer review.

\item \textbf{Quick to compute}

\item \textquotedblleft \textbf{Objective}\textquotedblright , in the
reductive meaning of being independent from individual judgements.
\end{itemize}

\noindent \textbf{Cons}:

\begin{itemize}
\item \textbf{Subjective interpretation of citations}, as it can be more
subjective than the judgment of experts - see Citation Statistics Report of
the International Mathematical Union (2008) \cite{CIT}.

\begin{itemize}
\item The new metric must be validated against other (possibly non metric)
criterion already validated.

\item It has been pointed out - see the discussion in the American Scientist
Open Access Forum, 2008 \cite{ASOAF}- that citation metrics are extremely
correlated with peer reviews.
\end{itemize}

\item \textbf{Improper comparison} of papers belonging to different fields.

\begin{itemize}
\item The SRM should be used to rank each author inside his scientific
community (e.g.: top 10\% - top 30\% \ - average...). It provides \textit{%
relative }- to fields - values, not absolute values. However, this allows
also for a coarse comparison of authors belonging to different areas, in the
sense that it is possible to easily recognize the authors that are in the
same (top/lower/ ...) merit class in each area.
\end{itemize}

\item \textbf{Improper comparison of papers having different ages.}

\begin{itemize}
\item Our SRM may be calibrated to different ages (as well as different
areas).
\end{itemize}

\item \textbf{Different databases provide different citations.}

\begin{itemize}
\item Many areas (naturally) share the same database.

\item The outcome of the scientific measure is in relative terms: the
ranking of one author is compared with the ranking of all researchers in the
same area (hence using the same database).

\item Different databases (Google Scholar, MathSciNet,...) provide different
numbers (in terms of citation of each paper), but only via a scaling factor:
the overall ranking of the papers, with respect to the number of citations
received, remains essentially the same, see \cite{ASOAF}.
\end{itemize}

\item \textbf{Co-authors}

\begin{itemize}
\item It is possible to normalize the citation numbers per each single
author. For some fields (where papers have typically many co-authors) this
may be problematic.
\end{itemize}

\item \textbf{Incorrect citations} attributed to an author and self citations

\begin{itemize}
\item Both problems can be easily addressed by the systematic use of Author
Codes (a code that identify the author).
\end{itemize}

\item \textbf{A single number is insufficient for the evaluation of a
complex feature}, such as scientific research.

\begin{itemize}
\item We agree: It is necessary to find multiple metrics (including
time-based metrics). We propose one of them.

\item This argument should not lead to abandon the search of appropriate
multiple metrics.
\end{itemize}

\item \textbf{Quality} of the scientific research \textbf{can not be reduced
to citations}

\begin{itemize}
\item Agree: it is only one component that however should be properly
quantified.
\end{itemize}

\item \textbf{Negative credit:} citations may be attributed \textit{not as
reward citations} (to give credit to the work of the cited author) but as
negative credit (or \textquotedblleft rhetorical credit\textquotedblright\
due to the prestige of the cited author).

\begin{itemize}
\item True. Many are the motivations of citations and they varies among
authors: they do not always reflects reward, but certainly a large
percentage of citations are credit ones. Indeed:

\item The fact that citation based statistics often agree with other
validated form of valuation (peer review), see \cite{ASOAF}, suggests that,
to some degree, these metrics indeed reflects the impact of the author's
research.

\item The periodical peer review valuation should point out the macroscopic
exceptions to reward citations (papers mostly cited for their fallacy).
\end{itemize}

\item \textbf{Disincentive} for young researcher to study subjects more
innovative but less popular

\begin{itemize}
\item True, even though this could be compensated by the consideration that
innovative paper (in a new field) typically receive many citations.
\end{itemize}

\item \textbf{Negative Implications}: The use of citation based metrics will
increase the number of citations (and improper ones).

\begin{itemize}
\item The abuse of citations is comparable with intentional misjudgment by
referee: unfortunately this is always possible.

\item When citations number are high (in the order of hundreds) it is
difficult to modify the citation records with self or friendly citations.

\item It is not completely unfair that a strong scientific group (capable to
produce a large number of published papers) receives additional credit (due
to potential additional citations from the group).
\end{itemize}
\end{itemize}

\subsection{Summary of the pros and cons of content valuation (peer review)}

\noindent \textbf{Pros:}

\begin{itemize}
\item \textbf{effective assessment} of the quality of the research;
\end{itemize}

\noindent \textbf{Cons:}

\begin{itemize}
\item \textbf{expensive,} in term of time and people involved: It \textit{%
can not be used systematically.}

\item \textbf{subjective,} since the result depends on the referees: do they
operate properly, are they competent and reliable? The choice of the
referees is a very delicate issue.

\item \textbf{non-uniformity of the judgment}, as each evaluator has a
personal scaling preferences leading to different ranking (specially in
different areas).
\end{itemize}

\end{document}